\begin{document}

\newcommand\relatedversion{}
\newtheorem{problem}{Problem}
\newtheorem{definition}{Definition}

\newcommand{\apex}{\text{apex}}
\newcommand{\lca}{\text{lca}}
\newcommand{\drop}{\text{drop}}
\newcommand{\slack}{\text{slack}}
\newcommand{\cov}{\text{cov}}

\title{\Large Approximation Algorithms for Steiner Tree Augmentation Problems\relatedversion}
\author{R. Ravi\thanks{Carnegie Mellon University, Tepper School of Business. Email: ravi@andrew.cmu.edu}
\and Weizhong Zhang\thanks{Carnegie Mellon University, Tepper School of Business. Email: weizhong@andrew.cmu.edu}
\and Michael Zlatin\thanks{Carnegie Mellon University, Tepper School of Business. Email: mzlatin@andrew.cmu.edu}}

\date{}

\maketitle


\fancyfoot[R]{\scriptsize{Copyright \textcopyright\ 2023 by SIAM\\
Unauthorized reproduction of this article is prohibited}}





\begin{abstract} \small\baselineskip=9pt In the Steiner Tree Augmentation Problem (STAP), we are given a graph $G = (V,E)$, a set of terminals $R \subseteq V$, and a Steiner tree $T$ spanning $R$. The edges $L := E \setminus E(T)$ are called links and have non-negative costs. The goal is to augment $T$ by adding a minimum cost set of links, so that there are 2 edge-disjoint paths between each pair of vertices in $R$. This problem is a special case of the Survivable Network Design Problem, which can be approximated to within a factor of 2 using iterative rounding~\cite{J2001}. 

We give the first polynomial time algorithm for STAP with approximation ratio better than 2. In particular, we achieve an approximation ratio of $(1.5 + \varepsilon)$. To do this, we employ the Local Search approach of~\cite{TZ2022} for the Tree Augmentation Problem and generalize their main decomposition theorem from links (of size two) to hyper-links.

We also consider the Node-Weighted Steiner Tree Augmentation Problem (NW-STAP) in which the non-terminal nodes have non-negative costs. We seek a cheapest subset $S \subseteq V \setminus R$ so that $G[R \cup S]$ is 2-edge-connected. Using a result of Nutov~\cite{N2010}, there exists an $O(\log |R|)$-approximation for this problem. We provide an $O(\log^2 (|R|))$-approximation algorithm for NW-STAP using a greedy algorithm leveraging the spider decomposition of optimal solutions. 
\end{abstract}

\setcounter{page}{1}

\section{Introduction}

Network design problems are fundamental in combinatorial optimization and have motivated the development of broadly applicable algorithmic techniques, in addition to being of practical interest. The general theme of such problems is to satisfy a certain connectivity requirement in a graph while using the cheapest subset of edges. Many problems in network design are special cases of the Survivable Network Design Problem (SNDP)~\cite{J2001}. In SNDP, we are given a graph with non-negative costs on edges and a connectivity requirement $r_{ij}$ for each pair of vertices $i,j \in V$. The goal is to find a cheapest subgraph of $G$ so that there are $r_{ij}$ pairwise edge-disjoint paths between all pairs of vertices $i$ and $j$. 

SNDP can be approximated within a factor of 2 using Jain's iterative rounding algorithm~\cite{J2001}. For most special cases of SNDP, this algorithm yields the best currently known approximation ratio. One notable exception is the Steiner tree problem for which the best-known approximation factor is $\ln 4 < 1.39$~\cite{byrka2013steiner}. Recently,  algorithms for the Weighted Tree Augmentation Problem (WTAP) and Weighted Connectivity Augmentation (WCAP) were designed with an approximation ratio below 2~\cite{TZ2021,TZ2022,TZ2022new}. Interestingly, the initial improvement of the 2-approximation ratio for Steiner trees motivated the relative greedy heuristic of Zelikovsky~\cite{zelikovsky1996better}, and the recent improvements for WTAP and WCAP are also motivated by the relative greedy heuristic.

In this paper, we expand the application of the relative greedy heuristic to break the best-known approximation factor of 2 for another problem in Network Design. In particular, we examine the ``Steiner'' variant of WTAP, in which we seek to cheaply augment a Steiner tree on a set of terminals $R$ to a 2-edge-connected Steiner subgraph spanning $R$. Here the augmentation may involve Steiner nodes, and we achieve an approximation factor of $(1.5 + \varepsilon)$, following the approach of Traub and Zenklusen in~\cite{TZ2022}.
We also apply a relative greedy idea to the node-weighted version and obtain a log-squared approximation, following the development of Klein and Ravi in~\cite{KR1995}.

\subsection{Weighted Tree Augmentation}
In the Weighted Tree Augmentation Problem (WTAP), we are given a tree $T = (V,E(T))$ and a set of additional edges $L$, called links, with non-negative costs $c_\ell$. We seek to find a cheapest subset $F \subseteq L$ such that $(V, E(T) \cup F)$ is 2-edge-connected\footnote{A graph is 2-edge-connected if there are two edge-disjoint paths between every pair of vertices.}. In particular, this is the special case of SNDP where the graph contains a spanning tree of cost 0, and  $r_{ij} = 2$ for all pairs of vertices $i$ and $j$. 

Notice that the problem of augmenting a connected graph $G$ to a 2-edge-connected graph can be solved using WTAP, by contracting the 2-edge-connected components of $G$, yielding a tree to augment. 
In fact, the more general problem of augmenting a $k$-edge-connected graph to a $(k+1)$-edge-connected graph -- called the Connectivity Augmentation Problem -- is equivalent to WTAP when $k$ is odd~\cite{DKL1976}. 

There are several 2-approximation algorithms for WTAP. The first such result is  due to Frederickson and J\a'aj\a'a ~\cite{frederickson1981approximation}. They also show that WTAP is NP-hard, even in the unweighted setting on trees of diameter 4. WTAP was shown to be APX-Hard by Kortsarz et.~al.~\cite{KKL2004}.
Other approaches that achieve a factor of 2 for this problem include the primal-dual method of~\cite{GGPSTW1994} and Jain's iterative rounding algorithm for general Survivable Network Design~\cite{J2001}. 

While there have been many results which improve upon the ratio of 2 in certain special cases~\cite{D2019,CN2013,PRZ2021}, until recently, this was the best known approximation ratio for general, weighted TAP. 
In~\cite{TZ2021}, Traub and Zenklusen, building on the ideas of Cohen and Nutov~\cite{CN2013}, use a greedy local search algorithm to achieve an approximation ratio of $(1+ \ln 2 +\varepsilon)$. They begin with a 2-approximate solution using only up-links (links going from a node to its ancestor after rooting the tree at an arbitrary node), and iteratively improve on this solution using local moves. Each local improvement consists of adding a subset of links to the solution and dropping any up-links that are rendered unnecessary for feasibility. The choice of the subset of links to add minimizes the ratio of their cost to the drop they effect and hence applies a relative greedy approach. 
The relative greedy method was first introduced by Zelikovsky~\cite{zelikovsky1996better} to improve upon the ratio of 2 for the Steiner tree problem. It was then used in the context of WTAP by Cohen and Nutov~\cite{CN2013}, to achieve an approximation ratio of $1 + \ln{2}$ for constant diameter trees. 
Traub and Zenklusen~\cite{TZ2022} subsequently improved upon their previous algorithm, bringing the approximation ratio down to $1.5 + \varepsilon$. They provide a framework for improving upon the relative greedy algorithm using a non-oblivious local search algorithm which uses a more sophisticated potential-function based analysis. The main idea behind this improvement is that the local search algorithm not only drops links from the initial up-link solution but can also drop links that were added in previous iterations of the algorithm.

\subsection{Steiner Tree Augmentation Problem (STAP)}

In this paper, we examine the ``Steiner'' variant of the classic weighted Tree Augmentation Problem. Here, we seek to cheaply augment a Steiner tree on a set of terminals to a 2-edge-connected Steiner subgraph spanning these terminals. Importantly, the augmentation may use nodes that are not in the tree to be augmented.

\begin{problem}[STAP]
We are given as input a graph $G = (V,E)$, a set of terminals $R \subseteq V$, and a minimal Steiner tree $T$ spanning $R$. The edges of $G$ which are not in $T$ are called links and are denoted by $L$. That is, $L :=E(G) \setminus E(T)$.
Note that $L$ may have endpoints in $V(T)$ or $V \setminus V(T)$.
Finally, we have a cost function $c: L \to \mathbb{R}_{\geq 0}.$

The goal is to augment $T$ to be a 2-edge-connected Steiner subgraph spanning $R$. That is, we seek $S \subseteq V \setminus R$ and $F \subseteq L$ of minimum cost such that the graph $(V(T) \cup S, E(T) \cup F)$ has two edge-disjoint paths between every pair of terminals. 
This is equivalent to requiring that $(V(T) \cup S, E(T) \cup F)$ is a 2-edge-connected graph. Thus, we assume in the remainder that $V(T) = R$.
\end{problem}

\begin{figure}[h]
		\centering
		\begin{tikzpicture}[scale=0.7]

		
		\begin{scope}
		
		\draw [-] [red, line width=0.4mm,xshift=0 cm] plot [smooth, tension=1] coordinates {(3,6) (0,4)};
		\draw [-] [red, line width=0.4mm,xshift=0 cm] plot [smooth, tension=1] coordinates {(3,6) (3,4)};
		\draw [-] [red, line width=0.4mm,xshift=0 cm] plot [smooth, tension=1] coordinates {(3,6) (6,4)};
		\draw [-] [red, line width=0.4mm,xshift=0 cm] plot [smooth, tension=1] coordinates {(0,4) (-1,2)};
		\draw [-] [red, line width=0.4mm,xshift=0 cm] plot [smooth, tension=1] coordinates {(0,4) (1,2)};
		\draw [-] [red, line width=0.4mm,xshift=0 cm] plot [smooth, tension=1] coordinates {(3,4) (3,2)};
		
		\draw [dashed] [black, line width=0.4mm,xshift=0 cm] plot [smooth, tension=1] coordinates {(6,4) (3,3.3) (0,4)};
		\draw [dashed] [black, line width=0.4mm,xshift=0 cm] plot [smooth, tension=1] coordinates {(0,4) (1.85,2.7) (3,2)};
		\draw [dashed] [blue, line width=0.4mm,xshift=0 cm] plot [smooth, tension=1] coordinates {(6,4) (5,2)};
		\draw [dashed] [blue, line width=0.4mm,xshift=0 cm] plot [smooth, tension=1] coordinates {(3, 4) (5,2)};
		\draw [dashed] [blue, line width=0.4mm,xshift=0 cm] plot [smooth, tension=1] coordinates {(-1, 2) (0, 0)};
		\draw [dashed] [blue, line width=0.4mm,xshift=0 cm] plot [smooth, tension=1] coordinates {(1, 2) (0,0)};
		\draw [dashed] [blue, line width=0.4mm,xshift=0 cm] plot [smooth, tension=1] coordinates {(0, 0) (1.4, 0)};
		\draw [dashed] [blue, line width=0.4mm,xshift=0 cm] plot [smooth, tension=1] coordinates {(2.5, 0.8) (1.4, 0)};
		\draw [dashed] [blue, line width=0.4mm,xshift=0 cm] plot [smooth, tension=1] coordinates {(2.5, 0.8) (3, 2)};
		
		\draw[black,fill=white] (3,6) ellipse (0.15 cm  and 0.15 cm);	
		\draw[black,fill=white] (0,4) ellipse (0.15 cm  and 0.15 cm);			
		\draw[black,fill=white] (3,4) ellipse (0.15 cm  and 0.15 cm);
		\draw[black,fill=white] (6,4) ellipse (0.15 cm  and 0.15 cm);	
		\draw[black,fill=white] (-1,2) ellipse (0.15 cm  and 0.15 cm);	
		\draw[black,fill=white] (1,2) ellipse (0.15 cm  and 0.15 cm);	
		\draw[black,fill=white] (3,2) ellipse (0.15 cm  and 0.15 cm);
		\draw[blue,fill=white] (5,2) ellipse (0.15 cm  and 0.15 cm);
		\draw[blue,fill=white] (2.5, 0.8) ellipse (0.15 cm  and 0.15 cm);
		\draw[blue,fill=white] (1.4, 0) ellipse (0.15 cm  and 0.15 cm);
		\draw[blue,fill=white] (0, 0) ellipse (0.15 cm  and 0.15 cm);

		\node (r) at (2.6,6) {{$r$}};
		\end{scope}
		\end{tikzpicture}
		
		\caption{An instance of STAP. The red edges are the edges of the given tree. The dashed edges are the links, and the blue links form a feasible solution.}
		\label{fig:l2l}
\end{figure}
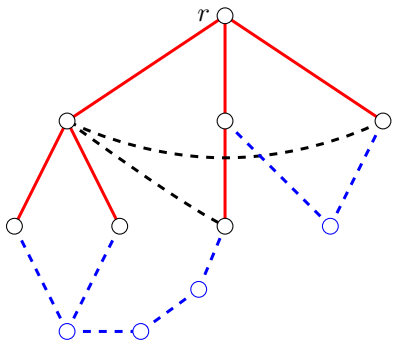

STAP is a natural augmentation analogue of WTAP for augmenting Steiner minimal trees connecting only the terminals of interest in any application. It is a special case of the minimum-cost Steiner 2-edge-connected subgraph problem (S2ECSP) when the graph contains a Steiner tree of cost zero. S2ECSP was introduced by Monma et al.~\cite{monma1990minimum} in the context of the design of survivable telecommunication and logistics networks where it has been extensively studied~\cite{steiglitz1969design,christofides1981network,grotschel1995design}. Linear time algorithms are known in the special case of Halin~\cite{winter1985generalized} and series-parallel~\cite{winter1986generalized} graphs, and a complete linear description of the dominant of the associated polytope is known for a class of graphs called perfectly Steiner 2-edge connected graphs, which generalize series-parallel graphs~\cite{BAIOU20013}.

Notice that WTAP is the special case of STAP where $R = V$. However, STAP is also a special case of SNDP. Indeed take $r_{ij} = 2$ for $i,j \in R$, and 0 otherwise. By setting the costs of edges in $E(T)$ to be 0 we get an instance of SNDP which is equivalent to the STAP instance. Therefore, STAP can be approximated to within a factor of 2 using Jain's algorithm~\cite{J2001}.
In this paper, we are able to break the approximation barrier of 2 for STAP.

\begin{theorem}\label{th:stap}
For any $\varepsilon > 0$, there is a $(1.5 + \varepsilon)$-approximation algorithm for STAP which runs in polynomial time. 
\end{theorem}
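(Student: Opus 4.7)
The plan is to reduce STAP to a hyperlink covering problem on the tree $T$ and then adapt the Traub--Zenklusen relative greedy local search from WTAP, with the main technical effort going into generalizing their decomposition theorem from pairs to arbitrary subsets of terminals.

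The first step is a reduction to bounded-size Steiner components. Fix a constant $k = k(\varepsilon)$. For every subset $X \subseteq R$ with $|X| \le k$, compute a minimum-cost Steiner tree in $G$ connecting the vertices of $X$ using only Steiner nodes from $V \setminus R$ as internal vertices; call this a \emph{hyperlink} $h_X$. Associate with $h_X$ the coverage $\cov(h_X) \subseteq E(T)$, defined as the edge set of the minimal subtree of $T$ spanning $X$. A Borchers--Du style argument shows that restricting to hyperlinks spanning at most $k$ terminals inflates the optimum by a factor of only $(1+O(1/\log k))$, which we absorb into $\varepsilon$. There are at most $|R|^k$ such hyperlinks, so this produces polynomially many candidate hyperlinks, and STAP reduces to choosing a minimum-cost multiset of hyperlinks whose coverages union to $E(T)$; feasibility for the hyperlink cover is exactly $2$-edge-connectedness of the augmentation.

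The second step is to instantiate the Traub--Zenklusen framework on this hyperlink cover instance. After rooting $T$, generalize the notion of up-link to up-hyperlink: the coverage of any hyperlink $h_X$ decomposes around $\lca(X)$ into subtrees sitting under the children of $\lca(X)$, each a union of root-to-leaf segments, which plays the role of up-links in the classical construction of a $2$-approximate starting solution. We then run the relative greedy loop: in each iteration, find a subset $S$ of hyperlinks minimizing the ratio of $c(S)$ to the drop in cost of current solution hyperlinks that become redundant once $S$ is added, and update the solution accordingly, stopping when no ratio below $1$ remains.

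The third and hardest step is to generalize the Traub--Zenklusen decomposition theorem from two-element links to hyperlinks. Concretely, one must show that any feasible hyperlink cover $\mathrm{OPT}$ can be partitioned into a sequence of local moves whose cost-to-drop ratios telescope to yield the $(1 + \ln 2)$ bound when combined with the greedy selection rule. For ordinary links the decomposition exploits the fact that coverages are paths in $T$, giving a clean laminar/interval structure. For hyperlinks the coverages are arbitrary subtrees, so the witness-tree argument must be redone by processing hyperlinks in order of decreasing $\lca$ depth, repeatedly carving $\mathrm{OPT}$ into blocks whose coverages form a rainbow-free family of subtrees that can be charged against the greedy solution. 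The main obstacle is precisely this: ensuring that the potential-function analysis of Traub and Zenklusen still telescopes when the ``coverage atoms'' are general subtrees rather than paths, which requires new combinatorial bounds on how Steiner components can overlap on $T$ and a careful reworking of their greedy charging argument against these hyperlink blocks.
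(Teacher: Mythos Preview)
Your high-level plan---reduce to bounded-size full components via a Borchers--Du argument, then run a Traub--Zenklusen style relative greedy---matches the paper. But there is a real gap in how you set up the starting solution, and this gap is exactly what makes your Step~3 look hard when in fact it is not.

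You propose to build the initial $2$-approximation out of ``up-hyperlinks'' whose coverages are general subtrees of $T$. With that choice, the decomposition theorem does become difficult, because the Traub--Zenklusen dependency-graph argument relies on each element $u$ of the starting solution having coverage $P_u$ equal to a \emph{path}: the minimal covering family $F_u \subseteq F^*$ then inherits a linear order along $P_u$, which is what makes the dependency graph a branching and drives the whole analysis. If $P_u$ is a subtree, that linear order is gone, and you would indeed need substantial new combinatorics---which you have not supplied.

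The paper sidesteps this entirely. The starting solution consists of ordinary up-\emph{links} (pairs of vertices in ancestor--descendant relation), not up-hyperlinks. To obtain such a solution of cost at most $2\,\mathrm{OPT}$, one takes each full component of an optimal STAP solution, walks an Euler tour of it to get a cyclic ordering $r_1,\ldots,r_k$ of its terminals, and replaces the component by the up-links $(r_i,\lca(r_i,r_{i+1}))$; metric completion and shadowing give the cost bound. Once the starting solution is made of up-links, the only new observation needed for the decomposition theorem is that the intersection of a path $P_u$ with the subtree $T_\ell$ covered by a hyperlink $\ell$ is again a subpath of $P_u$. With this, the Traub--Zenklusen construction of $F_u$, the ordering on $F_u$, the branching structure of the dependency graph, and the $k$-thinness argument all carry over essentially verbatim. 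So your ``hardest step'' dissolves once the right starting solution is used; the Euler-tour reduction to up-links is the idea you are missing.

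A secondary gap: you do not say how to find the greedy local move in polynomial time. The paper restricts the search to $k$-thin collections of hyperlinks and gives a dynamic program over $T$ whose table is polynomial because hyperlinks have bounded size $\gamma$; your proposal leaves this unaddressed.
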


The following $k$-edge-connected generalization of STAP is of interest: we are given a graph $G = (V,E)$ and a $k$-edge-connected subgraph $H = (R,E(H))$, with costs on the links $L := E \setminus E(H)$. The goal is to add nodes and a subset of links of minimum cost to $H$ so that there are $k+1$ pairwise edge-disjoint paths between each pair of vertices in $R$. This is the problem of augmenting a given $k$-edge-connected graph to be $(k+1)$-edge-connected, but where Steiner nodes may be included in the augmentation.


This problem is a special case of SNDP and hence admits a 2-approximation~\cite{J2001}.
We note that this problem reduces to STAP when $k$ is odd. Indeed, the $k$-edge-connected graph $G$ can be replaced with a tree whose min-cuts correspond to the min-cuts of $G$ \cite{DKL1976,FF2009}. Then we can apply our algorithm for STAP, and the solution will correspond to the desired augmentation of $G$.

However, when $k$ is even, the resulting augmentation problem is a cactus augmentation problem arising from the cactus structure of the min-cuts separating $R$~\cite{DKL1976,FF2009}. Traub and Zenklusen recently have provided a $(1.5+\varepsilon)$-approximation for the Weighted Cactus Augmentation Problem. However, no better-than-2 approximation is known for the variant in which nodes may be included in the augmentation. Thus, one of the open problems related to STAP is whether the methods in~\cite{TZ2022new} can be extended to the setting where links do not necessarily join two nodes of the cactus to be augmented.

\subsection{Node Weighted STAP}

\begin{problem}[NW-STAP]
In node weighted STAP, we are given a graph $G = (V,E)$, a set of terminals $R \subseteq V$, and a Steiner tree $T$ spanning $R$. The edges $L :=E(G) \setminus E(T)$ are called links, and the nodes in $V \setminus R$ are called Steiner nodes. Each Steiner node has a non-negative cost $c_v$. Our goal is to pick a minimum cost subset $S \subseteq V \setminus R$ so that the induced subgraph $G[R \cup S]$ is 2-edge-connected.
\end{problem}

We may also allow costs on links $c_\ell$ for $\ell \in L$. However, by subdividing each link with a node of cost equal to the cost of the link, we may assume that the links have cost 0. 

The node-weighted Steiner tree problem has an approximation ratio of $\Theta(\log |R|)$ due to Klein and Ravi~\cite{KR1995}. The lower bound is due to an approximation-preserving reduction from the set-cover problem. Interestingly, for proving the upper bound, their algorithm is also a relative greedy heuristic. It uses ``spiders'' which are star homeomorphs, and merge terminal clusters that occur at the feet (leaves of the star). The algorithm proceeds by finding minimum cost-ratio spiders (minimizing the total node cost divided by the number of feet or terminal clusters connected), and adding them to the solution. Their key idea is to show how to decompose an optimal node-weighted Steiner tree into spiders, and thus argue that there is a spider whose ratio cost is at least as good as that of the optimal solution. Then, a set-cover-based analysis is used to obtain a logarithmic guarantee.

We extend the relative greedy spider algorithm for NW-STAP. 

\begin{theorem}\label{th:nwstap}
There is an $O(\log^2 |R|)$-approximation algorithm for the node weighted Steiner tree Augmentation Problem on a tree on $|R|$ nodes which runs in polynomial time. 
\end{theorem}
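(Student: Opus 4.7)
The plan is to extend Klein--Ravi's relative greedy spider algorithm~\cite{KR1995} from the node-weighted Steiner tree setting to the 2-edge-connected augmentation setting, where the analogue of ``connecting two terminal clusters'' becomes ``merging two 2-edge-connected blocks''. After the subdivision reduction noted above, we may assume every link has cost $0$ and all cost lies on Steiner nodes in $V \setminus R$.

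\emph{Algorithm.} Maintain a partial solution $S \subseteq V \setminus R$ (initially empty) together with the block tree $B(S)$ of $G[R \cup S]$, whose vertices are the 2-edge-connected components (blocks) and whose edges are the bridges; note that $B(\emptyset)$ is the input tree $T$ with $|R|$ blocks. A \emph{spider} is a star-shaped structure consisting of a center node $v$ together with node-weighted shortest paths from $v$ to several current blocks $B_1,\dots,B_k$; adding it to $S$ creates cycles with the bridges of $B(S)$ and therefore merges every block on the Steiner subtree of $B(S)$ spanned by $\{B_1,\dots,B_k\}$. Each iteration selects a spider minimizing (approximately) the ratio $c(\text{spider})/(\kappa-1)$, where $\kappa$ is the number of blocks merged, updates $S$ and $B(S)$, and repeats until one block remains. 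The inner minimization enumerates the center $v$ and, for each $k \geq 2$, takes the $k$ nearest current blocks in node-weighted distance; the Steiner-subtree cost that is NP-hard to compute exactly is approximated within an $O(\log |R|)$ factor by invoking Klein--Ravi on the graph obtained by contracting each current block to a single terminal.

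\emph{Key lemma (spider decomposition of OPT).} Let $S^*$ be optimal, so $G[R\cup S^*]$ is 2-edge-connected. Take an open ear decomposition of $G[R\cup S^*]$ whose initial structure uses only edges of $T$; each subsequent ear is a path with both endpoints in $R$ and interior in $S^*$, and in the current block tree it behaves like a spider augmentation that merges at least two blocks. By a charging argument analogous to Klein--Ravi's decomposition of a tree into simple spiders, but applied to the 2-edge-connected structure through the ears, we obtain a partition of $S^*$ into spiders $S^*_1,\dots,S^*_t$ with $\sum_i(\kappa(S^*_i)-1) \geq |R|-1$ and $\sum_i c(S^*_i) \leq \text{OPT}$. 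By averaging, at every iteration there is a spider of ratio at most $\text{OPT}/(b-1)$, where $b$ is the current number of blocks. The standard set-cover style accounting then gives total greedy cost $O(\log |R|)\cdot\text{OPT}$ when the inner minimization is exact, and the extra $O(\log |R|)$ factor from the inner Steiner-tree approximation compounds to yield $O(\log^2 |R|)\cdot\text{OPT}$ overall.

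\emph{Main obstacle.} The difficult step will be proving the spider decomposition lemma: converting an ear decomposition of $G[R\cup S^*]$ into a partition into simple spiders for the \emph{current} block tree at every iteration, without double-charging Steiner nodes shared between ears, while guaranteeing a total block-reduction of $|R|-1$. This is the 2-edge-connected analogue of Klein--Ravi's tree-decomposition theorem, and obtaining a sharp enough version---a partition of cost rather than a mere cover---will require careful handling of the bridges of $B(S)$ and Steiner nodes that appear simultaneously in multiple ears of $G[R\cup S^*]$.
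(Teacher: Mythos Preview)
Your proposal and the paper's proof both land at $O(\log^2|R|)$ via a relative-greedy spider argument, but the two logarithms arise from entirely different places, and the place you put yours does not work.

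\textbf{Where the paper's two logs come from.} The paper measures the benefit of a spider as the number of \emph{tree edges} of $T$ it covers (equivalently, your $\kappa-1$). For the inner minimization it fixes a head $h$ and one foot $v$, observes that $P\mapsto|\cov(P\cup v)|$ is monotone submodular, and then uses Sviridenko's knapsack-constrained submodular maximization to find a \emph{constant}-factor approximate minimum-ratio pseudo-spider (Lemma~\ref{lem:minratio_spider}). No logarithm is spent here. The first $\log|R|$ comes instead from a structural fact about the optimum: by Theorem~\ref{thm:spider_decomp} (Naor--Panigrahi--Singh), one needs $O(\log|R|)$ node-disjoint spider decompositions of each full component of OPT before the resulting spiders collectively cover every edge of $T$; a single decomposition does not suffice. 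The second $\log|R|$ is the standard greedy set-cover loss.

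\textbf{Why your key lemma fails.} Your plan hinges on a single partition of $S^*$ into spiders with $\sum_i(\kappa(S_i^*)-1)\geq |R|-1$. This is false already in the simplest nontrivial case. Let $T$ be the path $r_1\!-\!r_2\!-\cdots-\!r_6$ and let the optimum be one full component, the caterpillar with spine $r_1\!-\!s_1\!-\!s_2\!-\!s_3\!-\!s_4\!-\!r_6$ and pendants $s_i\!-\!r_{i+1}$ for $i=1,\dots,4$. Any node-disjoint spider decomposition of this caterpillar splits the terminals into groups whose Steiner subtrees in $T$ miss at least two of the five tree edges; e.g.\ feet $\{r_1,r_2\}$, $\{r_3,r_4\}$, $\{r_5,r_6\}$ cover only three edges, so $\sum(\kappa_i-1)=3<5=|R|-1$. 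The ``ear decomposition'' route cannot repair this: ears are edge-disjoint but share terminal endpoints, and converting them into a node-partition of $S^*$ reproduces the same gap. This is precisely why the paper invokes the $O(\log k)$-many decompositions of~\cite{NPS2011} rather than a single Klein--Ravi decomposition.

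\textbf{The inner step.} Separately, your inner minimization is not well-specified: taking the $k$ nearest blocks to a fixed center does \emph{not} approximately minimize cost$/(\kappa-1)$, because $\kappa$ depends on which feet you choose, not just how many; and there is no Steiner-tree subproblem here that calls for a Klein--Ravi invocation. The correct tool is the submodular/knapsack argument the paper uses, which already gives a constant factor. If your decomposition lemma had been true and you had used submodular maximization, you would in fact have obtained $O(\log|R|)$ rather than $O(\log^2|R|)$---another sign that the lemma cannot hold as stated.
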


Note that NW-STAP is a special case of the Node-Weighted Survivable Network Design Problem (NW-SNDP). In NW-SNDP, there is a connectivity requirement $r_{ij}$ for every pair of vertices $i$ and $j$ in $V$, and costs on nodes. The goal is to find a cheapest node-weighted subgraph containing $r_{ij}$ pairwise edge-disjoint paths from $i$ to $j$ for all pairs $i,j$. 

Then, NW-STAP is the special case of NW-SNDP in which there is a Steiner tree on terminals $R$ of cost 0, and $r_{ij} = 2$ for all $i,j \in R$, and 0 otherwise. In~\cite{N2010}, Nutov gave an algorithm for NW-SNDP with an approximation ratio of $O(k\log |U|)$, where $k := \max_{i,j} r_{ij}$ and $U$ is the set of vertices with non-zero connectivity requirement to some other node. This implies a $O(\log |R|)$-approximation for NW-STAP.
Although we achieve a worse approximation guarantee, our algorithm is a simple relative greedy algorithm that does not require the full abstraction or generality of the result of Nutov.

\section{Preliminaries}
\label{sec:prelim}
We consider a solution to STAP $(S,F)$ where $S \subseteq V$ and $F \subseteq L$.

\begin{definition}
A \textbf{full component} of a STAP solution $(S,F)$, is a subtree of the solution where each leaf is a terminal (that is, a vertex of $R$), and each internal node is in $V \setminus R$.
\end{definition}

It is clear that any STAP solution can be uniquely decomposed into link-disjoint full components. We say that a full component ``joins" the terminals that it contains.

\begin{definition}
Let $(S,F)$ be a solution to STAP. We say that a set $A$ is \textbf{joined} by $(S,F)$ if there is a full component with leaves $A$.
\end{definition}

In fact, the feasibility of a solution is determined only by which sets of nodes it joins.

\begin{definition}
We say that a tree edge $e \in E(T)$ is \textbf{covered} by a solution $(S,F)$ if $e$ lies on the unique path in the tree $P_{uv}$ between two nodes $u$ and $v$ which are joined by a full component of $(S,F)$. \end{definition}

\begin{lemma}
A solution $(S,F)$ is feasible for STAP iff all tree edges are covered.
\end{lemma}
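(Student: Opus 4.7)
My plan is to reduce both directions to Menger's theorem: feasibility of $(S,F)$, meaning two edge-disjoint paths between every pair of terminals in $(R \cup S, E(T) \cup F)$, is equivalent to $|\delta(X)| \ge 2$ for every vertex set $X \subseteq R \cup S$ with $X \cap R \ne \emptyset$ and $R \setminus X \ne \emptyset$, where $\delta(X)$ is taken in the augmented graph. I would then translate ``all tree edges are covered'' into the same cut condition and handle the two directions separately.

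For the forward direction I would argue the contrapositive: if some tree edge $e \in E(T)$ is uncovered, I exhibit a cut $X$ with $|\delta(X)| = 1$. The natural choice is $X := R_a \cup S_a$, where $R_a$ is one side of the bipartition of $R$ induced by removing $e$ from $T$ (non-empty since $V(T)=R$), and $S_a \subseteq S$ collects the Steiner nodes belonging to full components whose terminal-leaves all lie in $R_a$. Because $e$ is uncovered, every full component has all its terminals on one side of $e$, so each Steiner node is unambiguously placed in $S_a$ or $S \setminus S_a$. A short case check on each link $\ell$, according to whether the terminals of its full component sit in $R_a$ or in $R_b$, shows that both endpoints of $\ell$ lie on the same side of $X$; since only $e$ crosses among tree edges, $|\delta(X)| = 1$, contradicting feasibility between any pair of terminals in $R_a \times R_b$.

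For the backward direction I would fix an arbitrary cut $X$ with terminals on both sides and lower-bound the crossing edges. Let $k$ be the number of tree edges in $\delta(X)$: since $T$ is connected and spans $R$ with both sides containing terminals, $k \ge 1$. If $k \ge 2$, the tree alone supplies two crossing edges. If $k = 1$, call the unique crossing tree edge $e$; since $e$ is covered, there is a full component $\mathcal{C}$ with terminal-leaves in both $R \cap X$ and $R \setminus X$, and any path within $\mathcal{C}$ between two such terminals uses only links and Steiner nodes of $\mathcal{C}$, so it must traverse $\delta(X)$ via at least one link. Hence $|\delta(X)| \ge 2$ in all cases.

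The main obstacle, and essentially the only point needing care, is the bookkeeping around the full component decomposition used in the forward direction: I rely on each link of $F$ and each Steiner node of $S$ lying in exactly one full component, so that $S_a$ is well-defined and the link case-check is exhaustive. Once this is set up, both cut computations are routine.
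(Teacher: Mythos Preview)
Your argument is correct and rests on the same idea as the paper's: both directions are governed by the cut a tree edge $e$ induces. The paper's proof is a four-sentence sketch---if $e$ is uncovered then ``the cut $e$ induces in the tree contains no links'' so $e$ is a bridge, and if every $e$ is covered then each such cut has a second crossing edge---without spelling out where Steiner nodes sit in that cut or why checking only tree-edge cuts suffices for feasibility. Your version makes both of these points explicit: you invoke Menger and treat an arbitrary terminal-separating cut $X$ in the backward direction, and in the forward direction you build $X=R_a\cup S_a$ by assigning each Steiner node to a side according to its full component. The extra bookkeeping you flag (that under an uncovered $e$ each full component, and hence each Steiner node, lands unambiguously on one side) is exactly what the paper is using tacitly; your proof is thus a more careful execution of the same approach rather than a different one.
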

\begin{proof}
Suppose all tree edges are covered, and consider any cut in the tree induced by an edge $e \in E(T)$. The cut contains $e$ and since $e$ is covered, some link in $F$ also crosses the cut. Thus, any cut separating terminals has a cardinality of at least 2. This implies two edge-disjoint paths between $u$ and $v$ for every $u$ and $v$ in $R$. the augmentation is a 2-edge-connected graph.

Conversely, if some edge $e = (u,v) \in E(T)$ is not covered, then the cut $e$ induces in the tree contains no links and there cannot be two disjoint paths between $u$ and $v$. \end{proof}

In the WTAP problem, we must choose a set of links to cover all the edges of a given tree. In particular, each link joins exactly two tree vertices. The added difficulty in the case of STAP is that full components may join an arbitrary number of terminals. Thus, we introduce the Hyper-TAP problem as the natural generalization of WTAP to hyper-links, which join arbitrary subsets of tree vertices.

\begin{problem}[Hyper-TAP]
In Hyper-TAP, we are given a tree $T = (V,E)$, and a collection of hyper-links $\mathcal{L} \subseteq 2^V$, with non-negative costs $c_\ell$ for $\ell \in \mathcal{L}$. The goal is to cover the edges of the given tree with the minimum cost subset of hyper-links.
\end{problem}

Consider a hyper-link $\ell = \{a_1, \ldots, a_k\}$. We say that the vertices $a_1, \ldots, a_k$ are \textit{joined} by $\ell$. After fixing a root of $T$,  denote the least common ancestor of $\{a_1,\ldots,a_k\}$ by $\lca(a_1,\ldots,a_k)$ and define $\apex(\ell) := \lca(a_1, \ldots, a_k)$. Let $P_{a,b}$ be the unique edge path in the tree from $a$ to $b$. 

Let $T_\ell$ be the subtree of $T$ consisting of the union of all paths between vertices joined by $\ell$. Equivalently, $T_\ell := \bigcup_{a \in \ell} P_{a, \apex(\ell)}$. We say that the link $\ell$ covers the edges in $T_\ell$. 

Then, the Hyper-TAP problem is the following covering problem:

$$\min_{Z \subseteq \mathcal{L}} \left\{\sum_{\ell \in Z} c(\ell) : \bigcup_{\ell \in Z} T_\ell = E\right \}.$$

Clearly, Hyper-TAP is an instance of Set Cover. However, they are in fact equivalent. Indeed, given any instance of Set Cover with ground set $E$ and subsets $\mathcal{S}$, we can create an instance of Hyper-TAP in which $T$ is a star, and the edges of $T$ correspond to elements of $E$. Finally, we can create a hyper-link $\ell$ for each $S \in \mathcal{S}$ covering exactly the edges corresponding to the elements covered by $S$.

Thus, we cannot expect to achieve approximation algorithms for general Hyper-TAP better than those for general Set Cover. However, in proving Theorem~\ref{th:stap}, we exploit the structure of Hyper-TAP instances that come from instances of STAP to achieve improved approximations in this case. 

As noted above, every STAP instance is equivalent to an instance of Hyper-TAP obtained as follows: for each subset of tree vertices $S \subseteq R$, find the cheapest full component joining $S$, and create a hyper-link $\ell_S$ joining $S$ with this cost. 
However, if we allow full components of unbounded cardinality, this reduction cannot be carried out in polynomial time. To perform this reduction efficiently, we restrict the size of the full components we consider. 
\begin{definition}
We say that a full component is \textbf{$\gamma$-restricted} if it joins at most $\gamma$ terminals. We say that a solution to STAP is \textbf{$\gamma$-restricted} if it uses only $\gamma$-restricted full components. Analogously, we say an instance of Hyper-TAP is \textbf{$\gamma$-restricted} if each hyper-link has size at most $\gamma$. 
\end{definition}

In Section~\ref{sec:restr}, we show that up to a factor of $(1+\varepsilon)$, it suffices to find the best $\gamma$-restricted solution to STAP for some constant $\gamma(\varepsilon)$. This allows us to reduce an instance of STAP to an instance of $\gamma$-restricted Hyper-TAP in polynomial time, while only losing a factor of $(1+\varepsilon)$ in the approximation ratio.

\section{Our Techniques}
\label{sec:tech}
\subsection{Edge Weighted STAP}
\label{stap-outline}
Our algorithm for (edge-weighted) STAP is a local search algorithm which follows in the vein of the recent improved approximation algorithms for WTAP due to Traub and Zenklusen~\cite{TZ2021}. We find it helpful to first describe the relative greedy $(1+ \ln(2) + \varepsilon)$-approximation algorithm for STAP, and then show how it can be modified to achieve a $(1.5+\varepsilon)$-approximation. 

We begin by briefly describing the methods of~\cite{TZ2021}. The relative greedy algorithm for WTAP begins with an initial 2-approximate feasible solution. It then makes local moves to improve the current solution by adding carefully chosen subsets of links and dropping links in the initial solution which are rendered unnecessary for feasibility. 

An up-link is a link which joins two nodes having an ancestor-descendant relationship in the tree. The initial 2-approximate solution for WTAP has a special structure: it consists of only up-links and each tree edge is covered exactly once. In the WTAP setting, this structured 2-approximate solution can be found efficiently by replacing every link with two up-links to their least common ancestor and using dynamic programming to find the best up-links only solution~\cite{frederickson1981approximation}.

For a given up-link solution $U$ and a subset of links $C \subseteq L$, $\drop_U(C)$ is the set of up-links in $U$ which can be removed from $U \cup C$ while preserving feasibility. In each iteration, the relative greedy algorithm in~\cite{TZ2021} seeks to choose a subset of links $C$ minimizing the ratio between the cost of links in $C$, and the cost of up-links in $\drop_U(C)$. It then adds these links to the current solution and removes all the links in $\drop_U(C)$. 

However, the minimization cannot be done efficiently over all subsets of links. Traub and Zenklusen introduce the notion of a $k$-thin subset of links, and show that one can compute the minimizer over all $k$-thin subsets in polynomial time using dynamic programming. Thus, these subsets of links are simple enough so that we can efficiently choose the best to add at each iteration.

Crucial for the analysis of the algorithm is the property that, as long as the current solution is expensive, there will always be an improving $k$-thin subset of links to add. This follows from the main decomposition theorem of Traub and Zenklusen in~\cite{TZ2021}.

In order to apply the relative greedy approach to the STAP setting, we need to first find an initial structured 2-approximate up-link solution covering each tree edge exactly once. We show how to do this in Section~\ref{sec: 2approx} using Euler tours over the optimal solution components. In particular, we show the following.
\begin{lemma}
\label{lem:2approx}
Given an instance of STAP, let $OPT$ denote the cost of the optimal solution. There is a polynomial time algorithm which returns a feasible up-link solution $U \subseteq L$, with $c(U) \leq 2OPT$ and where each edge $e \in E(T)$ is covered exactly once.  
\end{lemma}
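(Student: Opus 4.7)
The plan is to construct the desired up-link solution $U$ in two steps: an Euler-tour based charging argument proving existence of such a $U$ with cost at most $2\,\text{OPT}$, and a tree dynamic program to find one in polynomial time.

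For the existence step, I would take any optimal STAP solution and decompose it into link-disjoint full components $C_1, \ldots, C_m$, so that $\sum_i c(C_i) = \text{OPT}$. For each $C_i$ with terminal set $A_i$ and apex $p_i := \lca_T(A_i)$, note that $p_i \in R$ since $V(T) = R$, so the apex is itself a terminal that can serve as an ancestor-endpoint of an up-full-component. I would replace $C_i$ by a collection of up-full-components covering exactly the same tree edges $T_{C_i}$ at total cost at most $2 c(C_i)$. The key tool is an Euler tour of the doubled full component $C_i$: each link is traversed twice, producing a closed walk of cost $2c(C_i)$ that visits the terminals of $A_i$ in a cyclic order $b_1, \dots, b_k$. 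Cutting the tour at carefully chosen terminals (and using $p_i$ as the shared apex-terminal of each resulting piece) produces up-full-components whose joined terminal set lies in $A_i \cup \{p_i\}$ and whose cost is charged against the corresponding sub-walk. Summing over $i$ yields a feasible up-link solution of cost at most $2\,\text{OPT}$ that covers each tree edge at least once.

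For the algorithmic step, I would exploit the laminar structure of up-full-components on $T$: an up-full-component with apex $p$ covers a subtree of $T$ rooted at $p$, so the covers of distinct up-full-components are nested or disjoint along each root-to-leaf path. A bottom-up DP on $T$ then computes, at each node $v$, the minimum-cost subset of candidate up-full-components whose coverages partition the tree edges of $T(v)$. Restricting to $\gamma$-restricted up-full-components (see the $\gamma$-restriction of Section~\ref{sec:restr}) keeps the candidate set polynomially bounded, so the DP runs in polynomial time. Because the Euler-tour argument certifies that some exact-cover up-link solution of cost at most $2\,\text{OPT}$ exists among the candidates, the DP's output satisfies the same bound.

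The main obstacle is pinning down the cost bound in the Euler-tour step. A naive two-stage construction -- first Euler-touring $C_i$ to obtain pairwise ``mini'' full components between consecutive terminals, and then applying the WTAP-style split-at-LCA trick to each pair to make it an up-component -- pays the factor of two twice and only yields a $4 c(C_i)$ guarantee. Avoiding this double-payment requires an integrated construction that exploits, in a single pass, both the doubling of $C_i$'s links and the availability of $p_i \in R$ as a free apex-terminal, so that each link of $C_i$ is charged to the output at most twice in total. Getting this bookkeeping right -- and verifying that the produced up-full-components indeed cover $T_{C_i}$ -- is the heart of the proof; the DP and feasibility parts then follow along standard lines.
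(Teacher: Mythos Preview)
Your proposal has a genuine gap: you identify the central difficulty (avoiding the factor-$4$ blowup in the Euler-tour step) but do not resolve it, and the machinery you introduce to try to resolve it is both unnecessary and aimed at the wrong object.

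First, the lemma asks for an \emph{up-link} solution, where an up-link is an ordinary size-two link between a tree node and one of its ancestors; it does not ask for ``up-full-components.'' The paper achieves this by a preprocessing step you omit: it first \emph{metric-completes} the instance (adding, for every pair $u,v\in R$, a direct link of cost equal to the shortest $u$--$v$ path in $(V,L)$) and then \emph{shadow-completes} it (adding, for every link $(u,v)$, all links $(u',v')$ with $u',v'$ on $P_{uv}$, at the same cost). With this in hand the Euler-tour argument becomes a one-line computation: the tour of a full component gives a cyclic order $r_1,\ldots,r_k$ of its terminals, the metric-completed links $(r_i,r_{i+1})$ have total cost at most $2c(C_i)$, and each up-link $(r_i,\lca(r_i,r_{i+1}))$ is a shadow of $(r_i,r_{i+1})$ and hence costs no more. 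Feasibility is a short cut argument. There is no need to attach the global apex $p_i=\lca_T(A_i)$ to anything, and no integrated bookkeeping is required; the factor $2$ is paid exactly once, at the doubling of the tour. Your attempt to use $p_i$ as a ``free apex-terminal'' is also problematic because $p_i$ need not lie in $A_i$, so it is not a vertex of $C_i$ and cannot be visited by the tour.

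Second, your algorithmic step is overbuilt. Once the existence of an up-link solution of cost at most $2\,\mathrm{OPT}$ is established, the optimal up-link solution can be computed by the classical Frederickson--J\'aj\'a dynamic program for WTAP on up-links; no $\gamma$-restriction or enumeration of full components is needed here. A separate standard shortening step (replacing up-links by shadows so that the covered paths become edge-disjoint) then yields the exact-cover property. The $\gamma$-restriction is used elsewhere in the paper, for the local-greedy improvement phase, not for producing the initial $2$-approximate up-link solution.
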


Next, we need to prove a decomposition result analogous to the theorem for WTAP. We extend the decomposition theorem for WTAP to arbitrary hyper-links in Section~\ref{sec:decompthm}.

\begin{definition}
Let $Z \subseteq \mathcal{L}$ be a collection of hyperlinks. We say that $Z$ is \textbf{$k$-thin} if for each $v \in V(T)$, we have $|\{\ell \in Z : v \in T_\ell \}| \leq k$.
\end{definition}

Let $P_u$ be the edges covered by up-link $u$. Given a set of up-links $U$ and a collection of hyper-links $Z \subseteq \mathcal{L}$, let $$\drop_U(Z) = \{u \in U : P_u \subseteq \bigcup_{\ell \in Z} T_\ell \}.$$

\begin{theorem}[Decomposition Theorem]\label{thm:decomposition}
Given an instance of Hyper-TAP  $(T,\mathcal{L})$, suppose $U$ is an up-link solution such that the sets $P_u$ are pairwise edge-disjoint for $u \in U$. Suppose $F \subseteq \mathcal{L}$ is any solution. Then for any $\varepsilon > 0$, there exists a partition $\mathcal{Z}$ of $F$ into parts so that:

\begin{itemize}
    \item For each $Z \in \mathcal{Z}$, $Z$ is $k$-thin for $k = \lceil{1/\varepsilon}\rceil$.
    \item There exists $Q \subseteq U$ with $c(Q) \leq \varepsilon \cdot c(U)$, such that for all $u \in U \setminus Q$, there is some $Z \in \mathcal{Z}$ with $u \in \text{drop}_U(Z)$. That is, $U \setminus Q \subseteq \bigcup_{Z \in \mathcal{Z} } \drop_U(Z)$.
\end{itemize}
\end{theorem}

Following the method of Traub and Zenklusen~\cite{TZ2021}, these two results are enough to prove that the local greedy algorithm achieves an approximation ratio of $1+ \ln{2} + \varepsilon$. 
However, in order to attain a polynomial runtime, we 
cannot afford to search over arbitrary size hyper-links. This is where we use the notion of $\gamma$-restricted hyperlinks.
We extend the result of Borchers and Du for Steiner trees~\cite{BD1997}, which shows a bounded ratio between optimal $\gamma$-restricted Steiner trees and optimal unrestricted Steiner trees, to the case of STAP. This allows us to efficiently approximately reduce an instance of STAP to an instance of Hyper-TAP, where each hyper-link has size bounded by a constant. 

\begin{lemma}[Bounding loss through restriction]
\label{lem:rest}
Given an instance of STAP, let OPT be the optimal value and $OPT_{\gamma}$ be the optimal value over all $\gamma$-restricted solutions. Then for all $\varepsilon > 0$, there exists $\gamma(\varepsilon)= 2^{\lceil{{\frac{1}{\varepsilon}}\rceil}}$ such that  
$$\frac{OPT_\gamma}{OPT} \leq 1 + \varepsilon.$$
\end{lemma}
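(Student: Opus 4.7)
The plan is to reduce the claim to a componentwise application of the Borchers--Du theorem for Steiner trees~\cite{BD1997}. Let $(S^*, F^*)$ be an optimal STAP solution and decompose $F^*$ into its link-disjoint full components $K_1, \dots, K_t$ joining terminal sets $A_i \subseteq R$, so that $OPT = \sum_i c(K_i)$. I would view each $K_i$ as a Steiner tree in $G$ spanning $A_i$ and invoke Borchers and Du to produce a $\gamma$-restricted Steiner tree $\widetilde K_i$ on $A_i$ with
\[
c(\widetilde K_i) \;\le\; \Bigl(1 + \tfrac{1}{\lfloor \log_2 \gamma \rfloor}\Bigr)\, c(K_i) \;\le\; (1+\varepsilon)\, c(K_i),
\]
using $\gamma = 2^{\lceil 1/\varepsilon \rceil}$, so that $\lfloor \log_2 \gamma \rfloor = \lceil 1/\varepsilon \rceil \ge 1/\varepsilon$. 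Decomposing each $\widetilde K_i$ into its own (by construction $\gamma$-restricted) full components $\{K_{i,j}\}_j$ then produces a collection of $\gamma$-restricted full components of total cost at most $(1+\varepsilon)\, OPT$, which we read as a feasible $\gamma$-restricted Hyper-TAP solution witnessing the bound $OPT_\gamma \le (1+\varepsilon)\, OPT$.

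The only STAP-specific piece to check is that the new pieces $\{K_{i,j}\}_{i,j}$ together still cover every tree edge covered by $OPT$. Since each original $K_i$ covers $\bigcup_{a,b \in A_i} P_{a,b}$, it suffices to verify that for each $i$ the pieces $\{K_{i,j}\}_j$ jointly cover $P_{a,b}$ for every pair $a,b \in A_i$. The full components of any Steiner tree form a tree structure in which adjacent full components share a leaf terminal; hence the intersection graph on $\{A_{i,j}\}_j$ (edges between sets sharing a terminal) is connected. Therefore, for any $a,b \in A_i$ there is a chain of terminals $a = x_0, x_1, \dots, x_s = b$ with each consecutive pair $x_{l-1}, x_l$ contained in a common $A_{i,j}$. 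The concatenated walk $\bigcup_l P_{x_{l-1}, x_l}$ is a walk in $T$ from $a$ to $b$, and since $T$ is a tree any such walk traverses the unique tree path $P_{a,b}$; in particular, every edge of $P_{a,b}$ lies in some $T_{K_{i,j}}$.

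The main obstacle is invoking the Borchers--Du bound in exactly this form. Their proof first reduces to binary full components and then recursively splits each such component at a balanced centroid; at every Steiner centroid the resulting pieces are extended by a short path to the nearest terminal, and the cost of these extensions is amortized over the $O(\log_2 \gamma)$ recursion levels to yield the $(1 + 1/\lfloor \log_2 \gamma \rfloor)$ cost blowup. Since their construction treats $K_i$ purely as a Steiner tree in the ambient graph $G$ and uses no auxiliary structure specific to Steiner trees, it transfers to our setting as a black box, and no new analysis is required beyond the coverage check above.
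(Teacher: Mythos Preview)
Your proof is correct and follows essentially the same approach as the paper: decompose an optimal STAP solution into full components, view each as a Steiner tree on its terminal set, and apply Borchers--Du componentwise. You are in fact slightly more careful than the paper, which omits the explicit coverage-preservation check you supply and instead uses optimality of $(S^*,F^*)$ to argue each full component is itself an optimal Steiner tree before invoking the ratio form of Borchers--Du; your use of the constructive form sidesteps that observation.
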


This ensures that we can generate Hyper-TAP instances with only constant-sized hyper-links, which, in particular, allows for computing the greedy local move used in each iteration of the algorithm in polynomial time. 

To prove Theorem~\ref{th:stap}, we use the idea of \cite{TZ2022} to convert the relative greedy algorithm into a non-oblivious local search algorithm which achieves a better approximation guarantee. The crux of the improvement is that the local search algorithm is able to leverage the gains by dropping links that are added during the course of the algorithm, rather than just those in the initial up-link solution. 

However, the decomposition theorem (Theorem~\ref{thm:decomposition}) only applies to up-links. To get around this, we can associate to each link $f$ in a STAP solution $F$, a set of up-links $W_f$ which are responsible for covering the same tree edges as $f$. These are called witness sets. Throughout the algorithm, we maintain a feasible STAP solution $F$ and a collection of witness sets $W_f$ for $f \in F$ such that $U := \bigcup_{f \in F} W_f$ is feasible. Whenever we add a collection of links to our solution $F$, we also add the associated witness sets to $U$, as well as drop up-links from witness sets which are unnecessary for $U$ to be feasible. Finally, when the witness set of a link $f$ becomes empty, we can remove $f$ from our solution without disrupting feasibility. 

The approximation ratio that this method provides is related to maximum size of a witness set. If the maximum size of a witness set is $W$, it yields an approximation ratio of  $H_{W} + \varepsilon$, where $H_W = \sum_{i = 1}^W \frac{1}{i}$. 

This approach was used in the context of WTAP where the witness set of a link $\ell = (u,v)$ initially consists of the up-links $\{(u,\lca(u,v)),(\lca(u,v),v))\}$, which together cover the same tree edges as the original link $\ell$. Since $|W_f| = 2$ this yields an approximation ratio of $\frac{3}{2} + \varepsilon$.

The challenge in the context of STAP is that it is unclear how to associate to each link $f$ a witness set of at most two up-links. However, the proof of Lemma~\ref{lem:2approx} yields a natural choice for the up-links in each witness set arising from the up-links generated by each subpath along the Euler tour.  This choice allows us to prove Theorem~\ref{th:stap}.

\subsection{Node Weighted STAP}
In our algorithm for NW-STAP, we will use the notion of minimum-ratio spiders introduced by Klein and Ravi~\cite{KR1995}.
\begin{definition}
A \textbf{spider} is a tree with at least 3 vertices and at most vertex of degree greater than 2. The degree 1 vertices are called feet and the unique vertex of a degree greater than 2 is called the head. The paths from the head of the spider to its feet are called legs. 
\end{definition}

A spider decomposition of a tree $T$ is a collection of node-disjoint spiders whose feet are exactly the leaves of the tree.
Given the tree $T = (R,E(T))$ to be augmented and a spider $(S,F)$, recall that the spider covers the subtree of the tree $T$ that is induced by the set of nodes in $S \cap R$. 
The following theorem shows that starting with $O(\log |R|)$ copies of an optimal augmentation, they can be decomposed into a collection of spiders which together cover the edges of $T$.
In other words, one can decompose $O(\log |R|)$ copies of the optimal solution into a collection of spiders that collectively covers the edges of $T$.
\begin{theorem}[Naor et. al.~\cite{NPS2011}]\label{thm:spider_decomp}
Given a tree $T$ with $k$ leaves, there exists a sequence of at most $O(\log k)$ spider decompositions whose union covers the edges of the tree.
\end{theorem}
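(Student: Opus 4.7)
The plan is an iterative contraction scheme. Starting from $T_0 = T$, I will repeatedly construct a spider decomposition $\mathcal{S}_i$ of the current tree $T_{i-1}$ and then contract each spider of $\mathcal{S}_i$ to a single supernode to obtain $T_i$. The target is to show that the number of leaves at least halves each round, so after $m = O(\log k)$ rounds $T_m$ has at most one leaf and hence no edges. Every edge of $T$ is then either consumed by some spider (and thus covered by the corresponding $\mathcal{S}_i$) or survives as an edge of some $T_i$; since $T_m$ has no edges, every edge of $T$ lies in a spider somewhere in the sequence, giving the claimed union covering.

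The key lemma to establish is: every tree $T'$ with $k' \geq 2$ leaves admits a spider decomposition in which every spider has at least $2$ feet. I would first pass to the \emph{topological tree} $T'_{\mathrm{top}}$ obtained by suppressing every degree-$2$ vertex of $T'$; this preserves the leaf set and ensures every remaining internal vertex has degree $\geq 3$. Then I would process $T'_{\mathrm{top}}$ bottom-up after rooting it at an internal vertex: at each step, select a deepest internal vertex $v$ whose unassigned neighbors in the current topological tree are all leaves, and form a spider with $v$ as head and legs equal to the original $T'$-paths from $v$ to those leaf neighbors. Because $v$ has degree $\geq 3$ in $T'_{\mathrm{top}}$ at the time of selection, the spider has at least $2$ feet; because distinct selections take place in disjoint subtrees at their respective moments, the spiders are node-disjoint. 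Iterate until every leaf of $T'$ is assigned, patching the tail (when the remaining fragment reduces to a path) with a degenerate spider through any interior vertex.

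Given the lemma, the halving is immediate: contracting each spider of $\mathcal{S}_i$ to a supernode absorbs every leaf of $T_{i-1}$ (since every leaf is a foot) and leaves every non-spider vertex (which was internal in $T_{i-1}$) with degree $\geq 2$ in $T_i$. Hence the only possible leaves of $T_i$ are the supernodes, of which there are at most $k_{i-1}/2$ because each spider accounts for at least two feet. Thus $k_i \leq k_{i-1}/2$ and $m = \lceil \log_2 k \rceil + O(1)$ iterations drive the leaf count below $2$, making $T_m$ edgeless.

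The main obstacle is the existence lemma — specifically, assigning every leaf of $T'$ to some spider while keeping the spiders node-disjoint and keeping every spider at $\geq 2$ feet. Working on the topological tree is what makes this tractable, since branching structure is exposed directly; the only delicate bookkeeping is the degenerate tail where no vertex of degree $\geq 3$ remains, which is handled by a path-spider that absorbs the last two leaves through an arbitrary internal vertex of the path.
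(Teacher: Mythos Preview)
The paper does not prove this theorem; it is quoted as a black box from Naor, Panigrahy, and Singh~\cite{NPS2011}, so there is no in-paper argument to compare against. Your iterative-contraction scheme is the standard approach, and the halving bookkeeping is correct: if every spider in $\mathcal{S}_i$ has at least two feet, then after contraction the only candidates for leaves of $T_i$ are the supernodes, of which there are at most $k_{i-1}/2$.

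There is, however, a genuine gap relative to what the statement (and its use in Section~\ref{sec:nwstap}) requires. Your $\mathcal{S}_i$ for $i\ge 2$ is a spider decomposition of the \emph{contracted} tree $T_{i-1}$, not of $T$. Under the paper's definition, a spider decomposition of $T$ must have its feet exactly at the leaves of $T$; moreover, a spider in $T_{i-1}$ need not even be connected as a subgraph of $T$, since two of its edges that share a supernode in $T_{i-1}$ may attach to different vertices once that supernode is expanded. For the downstream application this matters: the pseudo-spiders in Algorithm~\ref{algo:NWgreedy} must have feet in $R$, i.e.\ at leaves of the full component, so spiders whose feet are supernodes are not directly usable, and the sentence ``there is a solution consisting of pseudo-spiders with cost at most $O(\log|R|)\,OPT$'' does not follow from your construction as written.

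The fix is to \emph{lift} each $\mathcal{S}_i$ back to $T$: for every spider $S\in\mathcal{S}_i$ and every supernode $s$ that $S$ touches, replace $s$ by a path inside the subtree of $T$ that $s$ represents, connecting the edges of $S$ incident to $s$; if $s$ is a foot of $S$, extend that path further down to an actual leaf of $T$. Within a single round the lifted spiders remain node-disjoint, because distinct supernodes correspond to disjoint vertex sets of $T$ and each supernode lies in at most one spider of $\mathcal{S}_i$. The cost of lifting is that edges inside earlier-round supernodes get reused across rounds---and this reuse is precisely where the ``$O(\log k)$ copies'' in the theorem (equivalently, the $O(\log|R|)$ copies of the optimal augmentation in the surrounding discussion) come from. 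Your current write-up accounts for each edge of $T$ exactly once, which is too good to be true for the intended conclusion; add the lifting step explicitly and note the resulting $O(\log k)$ multiplicity.
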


Note that there is a key difference between using minimum ratio spiders to build a minimum node-weighted Steiner tree as in~\cite{KR1995} and using them for augmenting a Steiner tree. The coverage of a spider in the former case is simply the number of terminal connected components it merges, while in the latter, it is the number of edges of the Steiner tree to be augmented that it covers. 
Thus, in order to be able to efficiently 
compute the minimum ratio spider-like subgraph to add to the solution at each step, we 
will need to find a way to account for coverage of the subtree induced by the points of attachment of the spider. For this, we use the observation that the coverage function of the subtree induced by a subset of nodes in a tree is submodular. To account for the cost of the spider, we relax the spider to the notion of a pseudo-spider, where we pay for each leg of the spider separately even if the legs are not node-disjoint.

\begin{definition}
Given a node-weighted graph $G = (V,E)$, a \textbf{pseudo-spider} of $G$ with head $h \in V$ and feet $P \subseteq V$ is the (not necessarily disjoint) union of node-weighted shortest paths from the head to each foot $f \in P$.
\end{definition}

Note that if there exists a spider with head $h$ which joins feet $P$, then there is a pseudo-spider with head $h$ joining $P$ of at most the same cost. Hence, Theorem~\ref{thm:spider_decomp} implies that there is a solution consisting of pseudo-spiders with cost at most $O(\log |R|)OPT$.

Now, to find the minimum ratio pseudo-spider, we can fix a budget on the cost of the pseudo-spider and use monotone, submodular maximization under a knapsack constraint~\cite{sviridenko2004note} (where the items are the different legs whose costs are paid separately in the knapsack) to implement the greedy step.

Our algorithm will proceed by finding, in each step, a pseudo-spider which (approximately) minimizes the ratio $\frac{\text{cost of the pseudo-spider}}{\text{number of new tree edges of $T$ covered}}$. We continue greedily adding pseudo-spiders until all tree edges have been covered.
As discussed, the greedy step can be implemented with a constant factor approximation in polynomial time by using an algorithm for monotone submodular maximization subject to a knapsack constraint. 
Coupled with the analysis of the greedy algorithm as in~\cite{KR1995}, this gives us a proof of Theorem~\ref{th:nwstap}.
Details are in Section~\ref{sec:nwstap}.

\section{An Improved Approximation for Edge Weighted STAP}
\label{sec:stap}
We prove Theorem~\ref{th:stap} in this section using the outline in Section~\ref{stap-outline}.
In Section~\ref{sec: 2approx}, we show how to achieve a  2-approximation algorithm for STAP which returns a structured solution involving up-links with disjoint coverages. In Section~\ref{sec:algorithm}, we define a relative greedy algorithm for STAP which achieves an approximation ratio of $1+\ln(2) + \varepsilon$.
In Section~\ref{sec:restr}, we show that we can restrict our search for solutions to STAP to $\gamma$-restricted ones while only losing a factor of $1 + \varepsilon$ in the approximation ratio. In Section~\ref{sec:decompthm}, we prove the decomposition theorem and in Section~\ref{sec:dp}, we describe the dynamic program which allows us to implement our algorithms in polynomial time. Finally, in Section 4.6, we show how to modify the relative greedy algorithm using witness sets to achieve an approximation ratio of $1.5+\varepsilon$ and prove Theorem~\ref{th:stap}. 

\subsection{A Structured 2-Approximation for STAP}\label{sec: 2approx}
Consider an instance of STAP and let $(S^*, F^*)$ be an optimal augmentation. We describe an approximation algorithm for STAP which yields a feasible solution $(S,F)$ of cost at most $2c(F^*)$. This approximation ratio is already achievable using Jain's algorithm for SNDP. However, we show that we can find an approximate solution $(S,F)$ with nice structural properties. In particular, we will have $S \subseteq V(T)$, $F$ consisting of only up-links, and the coverage of each $\ell \in F$ being pairwise disjoint. 

We will perform a metric completion step on the instance without loss of generality. For every pair of nodes $u$ and $v$ in $R$, consider the shortest path from $u$ to $v$ in the graph $(V,L)$. We add a link $(u,v)$ with cost equal to the shortest path length. This does not change the problem since any solution which uses one of the added links may instead use the shortest path instead, paying the same cost.

Next, we perform shadow completion. Let $P_{uv}$ be the path in $T$ between terminals $u$ and $v$. If there exists a link $\ell = (u,v)$ of cost $c$, we will also add links $\ell' = (u',v')$ of cost $c$ where $u',v' \in P_{uv}$. Again, this can be done without loss of generality since any solution which uses an added link may be converted to a solution to the original instance of the same cost.

It is easy to see that both of these preprocessing steps can be done in polynomial time. 

For the following, we will consider the given tree $T$ to be rooted at an arbitrary node $r$. See Figure~\ref{fig:2approx} for an illustration of the proof of Lemma~\ref{lem:2apx}.

\begin{lemma}
\label{lem:2apx}
Given any feasible solution $(S^*,F^*)$ to STAP, there exists a solution $(S,F)$ with $c(F) \leq 2c(F^*)$ such that $S \subseteq R$. Furthermore, $F \subseteq L$ involves only up-links.
\end{lemma}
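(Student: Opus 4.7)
The plan is to build, for each full component $H_i^*$ of $(S^*, F^*)$ separately, a set of up-links between terminals with total cost at most $2c(H_i^*)$ covering the same tree edges as $H_i^*$; summing over components then produces a feasible up-link augmentation with $S = \emptyset \subseteq R$ and $c(F) \leq 2c(F^*)$.

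Fix a full component $H_i^*$ and relabel its terminal leaves $r_1, \dots, r_{k}$ in the cyclic order in which they appear along an Eulerian tour of $H_i^*$, which traverses every link exactly twice. Since each terminal is a leaf, it is visited exactly once. Let $w_j$ be the cost of the tour segment from $r_j$ to $r_{j+1}$ (indices mod $k$), so $\sum_j w_j = 2c(H_i^*)$. Metric completion supplies a link $(r_j, r_{j+1})$ of cost at most $w_j$, and shadow completion---applied along the tree path $P_{r_j, r_{j+1}}$---yields the \emph{right-half} up-link $(\lca_j, r_{j+1})$ of cost at most $w_j$, where $\lca_j := \lca(r_j, r_{j+1})$ in the rooted tree $T$. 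My candidate augmentation for $H_i^*$ consists of exactly these $k$ right halves, at total cost at most $\sum_j w_j = 2c(H_i^*)$.

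The main step is to verify that these right halves collectively cover the Steiner subtree $T_i$ of $\{r_1, \dots, r_k\}$ in $T$, which equals the set of edges covered by $H_i^*$. Fix $e \in T_i$ and classify each $r_j$ as \emph{above $e$} or \emph{below $e$} according to whether it lies outside or inside the subtree of $T$ cut off by $e$. Since $e \in T_i$, both classes are nonempty, so the cyclic sequence of classifications must contain at least one ``above-to-below'' transition, and exactly as many in the reverse direction. At any index $j$ realizing such a transition, $\lca_j$ is an ancestor of the upper endpoint of $e$ while $r_{j+1}$ is a descendant of the lower endpoint, so the tree path of the right-half up-link traverses $e$ and thereby covers it.

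The main subtlety---and the place where I expect the argument to be trickiest---is precisely this coverage claim. The naive reduction, which would split each cycle link $(r_j, r_{j+1})$ into \emph{both} shadow up-links $(\lca_j, r_j)$ and $(\lca_j, r_{j+1})$ mirroring the Frederickson--J\'aj\'a reduction for TAP, accumulates two doublings---one from the Eulerian tour and one from the splitting---and yields only a $4$-approximation. The parity of transitions on the cyclic sequence is what permits keeping just one half per segment while still covering every edge, collapsing the two doublings into a single factor of $2$. Summing over all full components, $F$ consists entirely of up-links between terminals (so $S = \emptyset \subseteq R$), lies in $L$ by shadow completion, and satisfies $c(F) \leq 2c(F^*)$.
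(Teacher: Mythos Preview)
Your proposal is correct and follows essentially the same approach as the paper: take an Euler tour of each full component to order its terminals, use metric completion to obtain the cycle links $(r_j,r_{j+1})$ of total cost at most $2c(H_i^*)$, use shadow completion to keep one up-link half per segment, and verify coverage of every tree edge via a transition argument on the cyclic sequence of ``above/below'' labels. The only cosmetic difference is that you retain the \emph{right} half $(\lca_j,r_{j+1})$ of each cycle link while the paper retains the \emph{left} half $(r_i,\lca(r_i,r_{i+1}))$; these choices are interchangeable by reversing the orientation of the Euler tour, and your parity-of-transitions coverage argument is in fact a slightly cleaner rendering of the paper's case analysis on which side of the cut contains the root.
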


\begin{proof}
Denote the full components of $(S^*,F^*)$ by $(A^*_1,H^*_1), \ldots, (A^*_p,H^*_p)$. For each full component $(A^*_i,H^*_i)$, we will take an Eulerian tour which traverses each link in $H_i$ exactly twice. This yields an ordering of the terminals in $S^* \cap R$, say $r_1, \ldots, r_k$. Now, let $v_i := \text{lca}(r_i,r_{i+1})$ for $i = 1, \ldots, n$, where $r_{n+1} := r_1$. 

We will consider the set of links $F := \{(r_1,v_1), (r_2,v_2),\ldots,(r_k,v_k)\}$. It is clear that $F$ contains only up-links between nodes in $R$. We will show that $c(F) \leq 2c(F^*)$. 

Note that the total cost of $F' := \{(r_1,r_2),(r_2,r_3),\ldots,(r_k,r_1)\}$ is at most $2c(F^*)$ because of the metric completion step, and the fact that the Eulerian tour traverses each link in $F^*$ exactly twice. Furthermore, the link $(r_i,v_i)$ is a shadow of $(r_i,r_{i+1})$, so it exists and has at most the cost of $(r_i,r_{i+1})$. Thus, $c(F) \leq c(F') \leq 2c(F^*)$.

We will now show that $F$ is a feasible solution. We want to show that $G' = (R \cup S, E(T) \cup F)$ is a 2-edge-connected graph. It suffices to show that $G'$ is connected after any edge $e \in E(T)$ is deleted. 

Thus, we fix some $e \in E(T)$ and consider the cut $(W, \bar W)$ it induces on the tree $T$. Notice that since $F'$ is a feasible solution and $e$ must be covered, there must be terminals $r_i$ and $r_j$ in the Euler tour with $r_i \in W$ and $r_j \in \bar W$. We assume $i<j$. In fact, since the tour is a cycle and therefore returns to its starting node, we also have a pair of vertices $r_{i'} \in \bar W$ and $r_{j'} \in W$ with $i' < j'$.

We claim that either link $(r_i, v_i)$ or link $(r_{i'}, v_{i'})$ covers $e$. Indeed, if $r \in W$, then both $v_i$ and $v_{i'}$ are in $W$ as well. In this case, link $(r_{i'},v_{i'})$ covers edge $e$. Otherwise, if $r \in \bar W$, then both $v_i$ and $v_{i'}$ are in $\bar W$ and link $(r_i,v_i)$ covers $e$. 

Thus, $F$ is a feasible solution with the desired properties.
\end{proof}

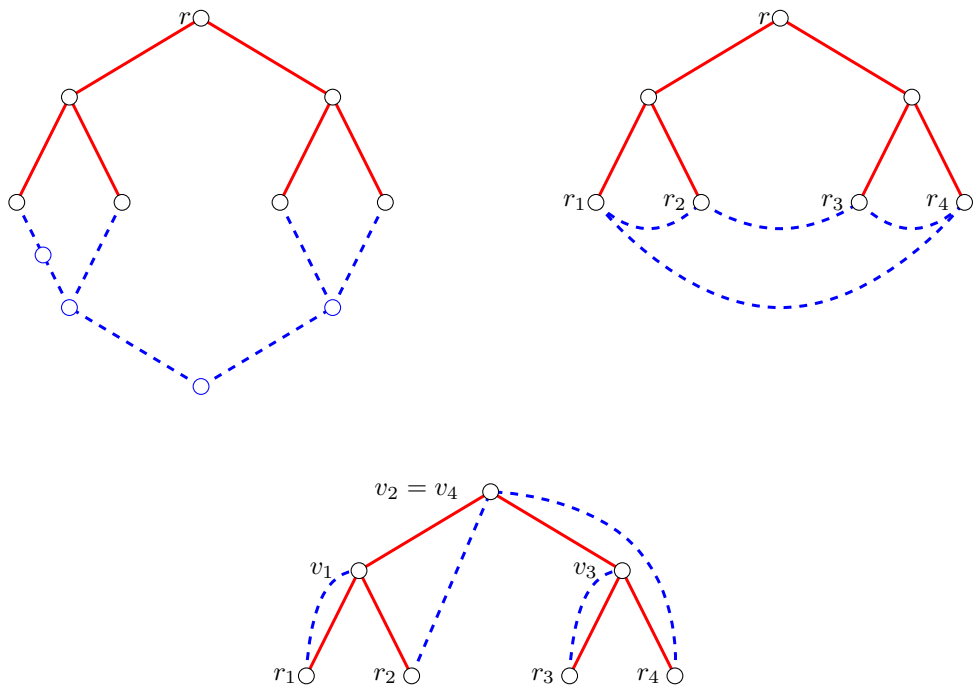
\begin{figure}[h]~\label{fig:2approx}

		\centering
		\begin{tikzpicture}[scale=0.7]

		
		\begin{scope}[xshift = 0cm]
		
		\draw [-] [red, line width=0.4mm,xshift=0 cm] plot [smooth, tension=1] coordinates {(3.5, 3.5) (1, 2)};
            \draw [-] [red, line width=0.4mm,xshift=0 cm] plot [smooth, tension=1] coordinates {(1, 2) (0, 0)};
            \draw [-] [red, line width=0.4mm,xshift=0 cm] plot [smooth, tension=1] coordinates {(1, 2) (2, 0)};
            \draw [-] [red, line width=0.4mm,xshift=0 cm] plot [smooth, tension=1] coordinates {(3.5, 3.5) (6, 2)};
            \draw [-] [red, line width=0.4mm,xshift=0 cm] plot [smooth, tension=1] coordinates {(5, 0) (6, 2)};
            \draw [-] [red, line width=0.4mm,xshift=0 cm] plot [smooth, tension=1] coordinates {(7, 0) (6, 2)};
            \draw [dashed] [blue, line width=0.4mm,xshift=0 cm] plot [smooth, tension=1] coordinates {(3.5, -3.5) (1, -2)};
            \draw [dashed] [blue, line width=0.4mm,xshift=0 cm] plot [smooth, tension=1] coordinates {(1, -2) (0, 0)};
            \draw [dashed] [blue, line width=0.4mm,xshift=0 cm] plot [smooth, tension=1] coordinates {(1, -2) (2, 0)};
            \draw [dashed] [blue, line width=0.4mm,xshift=0 cm] plot [smooth, tension=1] coordinates {(3.5, -3.5) (6, -2)};
            \draw [dashed] [blue, line width=0.4mm,xshift=0 cm] plot [smooth, tension=1] coordinates {(5, 0) (6, -2)};
            \draw [dashed] [blue, line width=0.4mm,xshift=0 cm] plot [smooth, tension=1] coordinates {(7, 0) (6, -2)};

		

            \draw[black,fill=white] (3.5, 3.5) ellipse (0.15 cm  and 0.15 cm);	
            \draw[black,fill=white] (1, 2) ellipse (0.15 cm  and 0.15 cm);	
            \draw[black,fill=white] (6, 2) ellipse (0.15 cm  and 0.15 cm);	
		\draw[black,fill=white] (0, 0) ellipse (0.15 cm  and 0.15 cm);	
            \draw[black,fill=white] (2, 0) ellipse (0.15 cm  and 0.15 cm);	
            \draw[black,fill=white] (5, 0) ellipse (0.15 cm  and 0.15 cm);	
            \draw[black,fill=white] (7, 0) ellipse (0.15 cm  and 0.15 cm);	
            \draw[blue,fill=white] (3.5, -3.5) ellipse (0.15 cm  and 0.15 cm);	
            \draw[blue,fill=white] (1, -2) ellipse (0.15 cm  and 0.15 cm);	
            \draw[blue,fill=white] (6, -2) ellipse (0.15 cm  and 0.15 cm);	
            \draw[blue,fill=white] (0.5, -1) ellipse (0.15 cm  and 0.15 cm);

		\node (1) at (3.2, 3.5) {{$r$}};

		\end{scope}

		
		\begin{scope}[xshift = 11cm]
		
		\draw [-] [red, line width=0.4mm,xshift=0 cm] plot [smooth, tension=1] coordinates {(3.5, 3.5) (1, 2)};
            \draw [-] [red, line width=0.4mm,xshift=0 cm] plot [smooth, tension=1] coordinates {(1, 2) (0, 0)};
            \draw [-] [red, line width=0.4mm,xshift=0 cm] plot [smooth, tension=1] coordinates {(1, 2) (2, 0)};
            \draw [-] [red, line width=0.4mm,xshift=0 cm] plot [smooth, tension=1] coordinates {(3.5, 3.5) (6, 2)};
            \draw [-] [red, line width=0.4mm,xshift=0 cm] plot [smooth, tension=1] coordinates {(5, 0) (6, 2)};
            \draw [-] [red, line width=0.4mm,xshift=0 cm] plot [smooth, tension=1] coordinates {(7, 0) (6, 2)};
            \draw [dashed] [blue, line width=0.4mm,xshift=0 cm] plot [smooth, tension=1] coordinates {(0, 0) (1, -0.5) (2, 0)};
            \draw [dashed] [blue, line width=0.4mm,xshift=0 cm] plot [smooth, tension=1] coordinates {(2, 0) (3.5, -0.5) (5, 0)};
            \draw [dashed] [blue, line width=0.4mm,xshift=0 cm] plot [smooth, tension=1] coordinates {(5, 0) (6, -0.5) (7, 0)};
            \draw [dashed] [blue, line width=0.4mm,xshift=0 cm] plot [smooth, tension=1] coordinates {(0, 0) (3.5, -2) (7, 0)};

		

            \draw[black,fill=white] (3.5, 3.5) ellipse (0.15 cm  and 0.15 cm);	
            \draw[black,fill=white] (1, 2) ellipse (0.15 cm  and 0.15 cm);	
            \draw[black,fill=white] (6, 2) ellipse (0.15 cm  and 0.15 cm);	
		\draw[black,fill=white] (0, 0) ellipse (0.15 cm  and 0.15 cm);	
            \draw[black,fill=white] (2, 0) ellipse (0.15 cm  and 0.15 cm);	
            \draw[black,fill=white] (5, 0) ellipse (0.15 cm  and 0.15 cm);	
            \draw[black,fill=white] (7, 0) ellipse (0.15 cm  and 0.15 cm);

            \node (1) at (3.2, 3.5) {{$r$}};
		\node (1) at (-0.4, 0) {{$r_1$}};
            \node (1) at (1.5, 0) {{$r_2$}};
            \node (1) at (4.5, 0) {{$r_3$}};
            \node (1) at (6.5, 0) {{$r_4$}};

		\end{scope}
		
		
		\begin{scope}[xshift = 5.5cm, yshift = -9cm]
		
		\draw [-] [red, line width=0.4mm,xshift=0 cm] plot [smooth, tension=1] coordinates {(3.5, 3.5) (1, 2)};
            \draw [-] [red, line width=0.4mm,xshift=0 cm] plot [smooth, tension=1] coordinates {(1, 2) (0, 0)};
            \draw [-] [red, line width=0.4mm,xshift=0 cm] plot [smooth, tension=1] coordinates {(1, 2) (2, 0)};
            \draw [-] [red, line width=0.4mm,xshift=0 cm] plot [smooth, tension=1] coordinates {(3.5, 3.5) (6, 2)};
            \draw [-] [red, line width=0.4mm,xshift=0 cm] plot [smooth, tension=1] coordinates {(5, 0) (6, 2)};
            \draw [-] [red, line width=0.4mm,xshift=0 cm] plot [smooth, tension=1] coordinates {(7, 0) (6, 2)};
            \draw [dashed] [blue, line width=0.4mm,xshift=0 cm] plot [smooth, tension=1] coordinates {(0, 0) (0.3, 1.5) (1, 2)};
            \draw [dashed] [blue, line width=0.4mm,xshift=0 cm] plot [smooth, tension=1] coordinates {(2, 0)  (3.5, 3.5)};
            \draw [dashed] [blue, line width=0.4mm,xshift=0 cm] plot [smooth, tension=1] coordinates {(5, 0) (5.3, 1.5) (6, 2)};
            \draw [dashed] [blue, line width=0.4mm,xshift=0 cm] plot [smooth, tension=1] coordinates {(7, 0) (6.2, 2.5) (3.5, 3.5)};

		

            \draw[black,fill=white] (3.5, 3.5) ellipse (0.15 cm  and 0.15 cm);	
            \draw[black,fill=white] (1, 2) ellipse (0.15 cm  and 0.15 cm);	
            \draw[black,fill=white] (6, 2) ellipse (0.15 cm  and 0.15 cm);	
		\draw[black,fill=white] (0, 0) ellipse (0.15 cm  and 0.15 cm);	
            \draw[black,fill=white] (2, 0) ellipse (0.15 cm  and 0.15 cm);	
            \draw[black,fill=white] (5, 0) ellipse (0.15 cm  and 0.15 cm);	
            \draw[black,fill=white] (7, 0) ellipse (0.15 cm  and 0.15 cm);

		\node (1) at (-0.4, 0) {{$r_1$}};
            \node (1) at (1.5, 0) {{$r_2$}};
            \node (1) at (4.5, 0) {{$r_3$}};
            \node (1) at (6.5, 0) {{$r_4$}};
            \node (1) at (0.3, 2) {{$v_1$}};
            \node (1) at (2.1, 3.5) {{$v_2 = v_4$}};
            \node (1) at (5.3, 2) {{$v_3$}};

		\end{scope}[]
		\end{tikzpicture}

		\label{fig:l22}
            \caption{An illustration of the existence of a 2-approximate up-link solution for STAP. In the top left picture, the blue dashed edges are a full component of the optimal solution. In the top right picture, $r_1, \ldots, r_4$ denotes the ordering of the terminals obtained from an Euler tour of the component in the left picture. Here, the blue links only go between terminals. In the bottom picture, shadows of the links in picture 2 are used to obtain a solution that only uses up-links. }
\end{figure}

Finally, we will need the following standard result, showing that we can shorten the up-link solution by replacing certain links with their shadows, so that each tree edge is covered exactly once.

\begin{lemma}
Given an up-link solution $U$, we can in polynomial time find an up-link solution $U'$ with $c(U') \leq c(U)$ and with $|\{\ell \in U' : e \in P_\ell\}| = 1$ for all $e \in E(T)$.
\end{lemma}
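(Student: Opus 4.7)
The plan is to trim every up-link in $U$ down to a shadow covering a carefully chosen subset of edges on its path, so that the resulting shadows partition $E(T)$ exactly. The shadow-completion preprocessing done in Section~\ref{sec: 2approx} guarantees that for every up-link $\ell = (a,b)$ (with $a$ the ancestor) and every subpath of $P_\ell$, there is an up-link in $L$ of the same cost covering exactly that subpath. So trimming an up-link to any contiguous sub-interval of its path is free of charge, and if we can assign edges to up-links such that each up-link's assigned edges form a contiguous sub-path of its path, we are done.

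Root $T$ at $r$. For an up-link $\ell=(a,b)$ with $a$ the top endpoint, define its \emph{height} as the depth of $a$; order the up-links in $U$ by this height, smaller depth first, with ties broken by a fixed rule (e.g.\ lexicographic order on the endpoints). For each $e \in E(T)$, since $U$ is feasible there is at least one up-link in $U$ covering $e$; let $\lambda(e)$ be the $\prec$-minimum such up-link. Set $E_\ell := \{e : \lambda(e) = \ell\}$. Then form $U'$ by replacing every $\ell \in U$ with $E_\ell \ne \emptyset$ by the shadow $\ell'$ covering exactly $E_\ell$, and discarding the rest. By construction $E(T)$ is the disjoint union of the $E_\ell$'s, so every tree edge is covered exactly once by $U'$; and since $c(\ell') = c(\ell)$ and we only drop links, $c(U') \le c(U)$. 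Running time is clearly polynomial: $\lambda(e)$ is computed for each of the $|E(T)|$ edges by scanning $U$.

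The one nontrivial step is the \emph{contiguity claim}: for every $\ell \in U$, the set $E_\ell$ is a contiguous subpath of $P_\ell$. To prove this, suppose $e_1, e_2 \in E_\ell$ with $e_1$ deeper than $e_2$ on $P_\ell$, and let $e''$ be any tree edge between them on $P_\ell$. Let $\ell^* = \lambda(e'')$; I must show $\ell^* = \ell$. Both $\ell$ and $\ell^*$ cover $e''$, so both tops are ancestors of $e''$'s lower endpoint, and therefore lie on the same root-to-$e''$ chain and are comparable by depth. If $\ell^* \prec \ell$, then the top of $\ell^*$ is at least as high as the top of $\ell$, which itself is an ancestor of $e_2$'s top endpoint, so the top of $\ell^*$ is an ancestor of $e_2$'s top endpoint as well. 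Combined with the fact that $e''$ is below $e_2$ on $P_\ell$, so the bottom endpoint of $\ell^*$ is a descendant of $e''$'s lower endpoint which is a descendant of $e_2$'s lower endpoint --- wait, actually the cleaner statement is: the subpath of $\ell^*$'s path from its top down through $e''$ passes through $e_2$ (because $e_2$ sits between $\ell^*$'s top and $e''$ on the common root-chain). Hence $\ell^*$ covers $e_2$, and then $\lambda(e_2) \preceq \ell^* \prec \ell$ contradicts $\ell \in E_\ell$. So $\ell^* \not\prec \ell$; by a symmetric argument using $e_1$, $\ell \not\prec \ell^*$; by the tie-breaking rule these two comparisons force $\ell^* = \ell$, as desired.

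The main obstacle is the contiguity step above, and in particular keeping the case analysis clean when $\ell$ and $\ell^*$ have equal height, which is why the strict linear (tie-broken) ordering $\prec$ is important. Once contiguity is established, the construction of $U'$ and the verification that $c(U') \le c(U)$ and that each edge is covered exactly once are immediate.
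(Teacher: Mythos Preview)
The paper states this lemma without proof, calling it a ``standard result'' and indicating only the high-level idea (replace links by shadows so that each tree edge is covered exactly once). Your approach---assigning each edge to a canonical covering up-link via a height-based total order and then trimming each up-link to its assigned edges---is a correct way to make this precise, and it is essentially the intended argument.

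One small logical detour in your contiguity argument: after you correctly rule out $\ell^* \prec \ell$ (via $e_2$), you invoke ``a symmetric argument using $e_1$'' to rule out $\ell \prec \ell^*$. This is unnecessary and does not work as written, because the roles of $e_1$ and $e_2$ are not symmetric (the order $\prec$ compares depths of \emph{top} endpoints, not bottoms). The right observation is much simpler: since $e'' \in P_\ell$, the link $\ell$ itself covers $e''$, so by definition $\ell^* = \lambda(e'') \preceq \ell$; combined with $\ell^* \not\prec \ell$ this already forces $\ell^* = \ell$. With this one-line replacement your proof is complete. (Also, ``contradicts $\ell \in E_\ell$'' should read ``contradicts $e_2 \in E_\ell$''.)
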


By Lemma~\ref{lem:2apx}, if the optimal solution to a STAP instance has cost $OPT$, then there is an up-link solution of cost at most $2OPT$. Since the optimal up-link solution can be easily computed in polynomial time (see e.g.~\cite{frederickson1981approximation}), we obtain Lemma~\ref{lem:2approx}.

\subsection{Relative Greedy for STAP}\label{sec:algorithm}
We now give the algorithm for STAP which, given any $\varepsilon > 0$, computes a solution $F$ with cost at most $(1+ \ln 2 + \varepsilon)OPT$ and runs in polynomial time. See Algorithm~\ref{algo:localgreedy}. 
\begin{algorithm}[h]
\caption{Relative greedy algorithm for STAP}
\textbf{Input:} A shadow-complete, metric-complete instance of STAP with graph $G = (V,E)$, tree $T = (R,E(T))$, links $L = E \setminus E(T)$, and $c: L \to \mathbb{R}$. Also an $\varepsilon > 0$.\\
\textbf{Output:} A solution $F \subseteq L$ with $c(F) \leq (1+\ln(2) + \varepsilon)OPT$.\\~

\begin{enumerate}
\item Compute a 2-approximate up-link solution $U$ such that each edge $e \in E(T)$ is covered exactly once (Lemma~\ref{lem:2approx}).
\item Let $\varepsilon' := \frac{\varepsilon/2}{1+\ln 2 + \varepsilon/2}$ and $\gamma := 2^{\lceil{1/\varepsilon'}\rceil}$.
\item For each $S \subseteq R$ where $|S| \leq \gamma$, compute the cheapest full component joining $S$ and denote the cost by $c_S$.
\item Create an instance of $\gamma$-restricted Hyper-TAP on tree $T = (R,E(T))$ with hyper-links $\mathcal{L} = \{\ell_S : S \subseteq R, |S| \leq \gamma\}$. Set the cost of hyper-link $\ell_S$ to be $c_S$.
\item Initialize $F := \emptyset$
\item Let $k := \lceil 4/\varepsilon \rceil$
\item While $U \neq \emptyset$:
\begin{itemize}
    \item Compute the $k$-thin subset of hyper-links $Z \subseteq 2^\mathcal{L}$ minimizing $\frac{c(Z)}{c(\drop_U(Z))}.$ 
    \item Let $F := F \cup Z$ and let $U := U \setminus \drop_U(Z)$.
\end{itemize}
\item \textbf{Return} A STAP solution with full components corresponding to the hyper-links in $F$.
\end{enumerate}
\label{algo:localgreedy}
\end{algorithm}

We assume we are given a shadow-complete, metric-complete instance of STAP with graph $G = (V,E)$, tree $T = (R,E(T))$, links $L: = E \setminus E(T)$ with costs $c: L \to \mathbb{R}$. 

The algorithm uses
Lemma~\ref{lem:2approx} to compute an up-link solution $U$ that has cost at most 2OPT and can be chosen to have $P_u$ disjoint for each $u \in U$.

By Lemma~\ref{lem:rest}, we may restrict our attention to finding a $\gamma$-restricted solution to the STAP instance for sufficiently large $\gamma$. Our algorithm now constructs an equivalent instance of $\gamma$-restricted Hyper-TAP. It enumerates over all subsets $S \subseteq R$ of at most $\gamma$ terminals, and computes the cheapest full component joining $S$. Notice that there are at most $n^\gamma$ such sets. Furthermore, for each subset $S$, we can compute the cheapest full component joining $S$ in polynomial time by solving an instance of Steiner tree with constantly many terminals using the following result. 

\begin{lemma}[Dreyfus and Wagner~\cite{DW1971}]\label{lem:FPT SteinerTree}
There is an algorithm for Steiner tree which returns the optimal Steiner tree and runs in time $O(n^3 \cdot 3^{p})$, where $p$ is the number of terminals.
\end{lemma}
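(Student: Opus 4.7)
The plan is to present the classical Dreyfus--Wagner dynamic program. The key object is the table $D[S,v]$, indexed by nonempty subsets $S$ of the terminal set (with $|S| \le p$) and by vertices $v \in V$, which stores the cost of a minimum Steiner tree that spans $S \cup \{v\}$. The answer returned is $\min_{v \in V} D[R, v]$ where $R$ is the full terminal set.

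For the base case with $|S| = 1$, say $S = \{t\}$, the entry $D[\{t\}, v\}$ is simply the shortest path distance from $t$ to $v$ in $G$, computable for all pairs by running Dijkstra (or Floyd--Warshall) once in $O(n^3)$ time. For the inductive step, the idea is to look at an optimal Steiner tree $\mathcal{T}^*$ realizing $D[S,v]$ and root it at $v$. There are two cases depending on the degree of $v$ in $\mathcal{T}^*$: either $v$ has only one neighbor $u$ in $\mathcal{T}^*$, in which case $D[S,v] = d(v,u) + D[S,u]$ for some $u$ (handled by a shortest-path relaxation in the metric induced by the current table), or $v$ is a branching vertex, in which case the tree splits at $v$ into two subtrees spanning a nontrivial partition $S = S' \sqcup (S \setminus S')$, giving $D[S,v] = D[S',v] + D[S \setminus S', v]$. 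Taking the minimum over the two cases yields the recurrence.

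The fill-in proceeds in order of increasing $|S|$. For each fixed $S$, the split step enumerates nonempty proper subsets $S' \subset S$, costing $2^{|S|}$; summing over all $S$ and multiplying by the $n$ choices of $v$ gives $\sum_{k=1}^{p} \binom{p}{k} 2^k \cdot n = O(n \cdot 3^p)$ by the binomial theorem. The degree-one case requires, for each $S$, propagating values $D[S,\cdot]$ through the shortest-path metric, which can be done in $O(n^2)$ per subset via a Dijkstra-like relaxation, contributing $O(n^2 \cdot 2^p)$ overall. Combined with the APSP preprocessing, the total runtime is dominated by $O(n^3 \cdot 3^p)$ as claimed.

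The main subtlety — and the only point that needs more than a line — is the correctness of the split/leaf dichotomy: one must argue that an optimal tree for $S \cup \{v\}$ in which $v$ is a Steiner (degree $\ge 2$) branching vertex decomposes cleanly at $v$, and that when $v$ has degree one the edge incident to $v$ can be replaced without loss by a shortest $v$-to-$u$ path (this uses that shadows/shortcuts cannot reduce cost below shortest-path distance). Once that structural claim is in hand, a straightforward induction on $|S|$ shows that $D[S,v]$ equals the optimum, and the runtime bound follows from the counting above.
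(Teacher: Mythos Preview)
Your sketch is a correct outline of the classical Dreyfus--Wagner dynamic program, and the runtime accounting via $\sum_{k}\binom{p}{k}2^{k}=3^{p}$ together with the $O(n^2)$ per-subset relaxation and the $O(n^3)$ APSP preprocessing is sound. Note, however, that the paper does not supply its own proof of this lemma at all: it is stated as a black-box citation to~\cite{DW1971} and used only to justify that the cheapest full component on a constant-size terminal set can be computed in polynomial time. So there is nothing to compare against; your write-up simply fills in what the paper leaves to the reference, and it does so faithfully to the original Dreyfus--Wagner argument.
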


For the remainder of the procedure the algorithm works with this Hyper-TAP instance and the previously computed up-link solution $U$.

We iteratively improve the current solution by finding the best $k$-thin subset of hyper-links to add. In particular, we find the $k$-thin subset of hyper-links $Z$ which minimizes the ratio $\frac{c(Z)}{c(\drop_U(Z))}.$ This can be done in polynomial time via dynamic programming, see Lemma~\ref{lem:dp}.

Finally, since at least one up-link is dropped in each iteration of the while-loop, this algorithm runs in polynomial time overall.

We now turn to analyzing the quality of the solution returned. This relies on the Decomposition Theorem~\ref{thm:decomposition} for Hyper-TAP which we prove in Section~\ref{sec:decompthm}.

With this Decomposition Theorem, we can immediately conclude that the relative greedy procedure computes a $(1 + \ln 2 + \varepsilon)$-approximation of the optimal $\gamma$-restricted Hyper-TAP solution by leveraging the results of previous work (See~\cite{CN2013},\cite{TZ2021} Theorem 6).
This proves our main Theorem~\ref{th:stap}.

\begin{proof}[Proof of Theorem~\ref{th:stap}]
We prove that for every $\varepsilon > 0$, Algorithm 7.1 returns a solution $F$ of cost at most $(1 + \ln 2 + \varepsilon)OPT$.

As usual, $OPT$ denotes the cost of the optimal solution to the original STAP problem. Let $OPT_\gamma$ denote the cost of the optimal $\gamma$-restricted STAP solution. Notice that by taking $\varepsilon' = \frac{\varepsilon / 2}{1 + \ln 2 + \varepsilon / 2}$, and $\gamma = 2^{\lceil{1/\varepsilon'}\rceil}$, we have that $OPT_\gamma \leq (1+ \varepsilon')OPT$ by Lemma~\ref{lem:rest}. 

By the Decomposition Theorem~\ref{thm:decomposition}, the relative greedy procedure returns a solution $F$ with cost at most $(1 + \ln 2 + \varepsilon/2)OPT_\gamma$. 

Hence our overall cost is at most $$c(F) \leq (1 + \ln 2 + \varepsilon/2)(1 + \frac{\varepsilon/2}{1+\ln 2 + \varepsilon / 2})OPT = (1 + \ln 2 + \varepsilon)OPT.$$ \end{proof}

\subsection{Effects of $\gamma$-restriction}\label{sec:effects_of_restriction}
\label{sec:restr}
In this section, we prove that, for any $\varepsilon > 0$, there is a large enough $\gamma$ so that the cost of the optimal $\gamma$-restricted solution to STAP costs at most $(1+\varepsilon)$ times the optimal cost of an unrestricted STAP solution. This is an extension of the result of Borchers and Du \cite{BD1997} for Steiner trees.

Recall that a full component of a STAP solution $(S,F)$, is a subtree of the solution where each leaf is a terminal (that is, a vertex in $R$), and each internal node is in $V \setminus R$.
Also, recall that 
a STAP solution is $\gamma$-restricted if each of its full components joins at most $\gamma$ terminals.

Notice that finding a minimum cost $2$-restricted STAP solution is simply the Weighted Tree Augmentation Problem. 
We show that for $\gamma$ large enough, the optimal $\gamma$-restricted STAP solution is close (in cost) to the optimal unrestricted solution.

First, we recall the following result for Steiner trees. Given a Steiner tree solution, a full component of this solution is a subtree whose leaves are terminals and whose non-leaves are non-terminals. A $\gamma$-restricted Steiner tree is a solution whose full components contain at most $\gamma$ terminals.

\begin{theorem}[Borchers and Du \cite{BD1997}]\label{thm:BorchersDu}
Let $\varepsilon > 0$ and $\gamma = 2^{\lceil{{\frac{1}{\varepsilon}}}\rceil}$. Fix any instance of Steiner tree and let $T^*$ be the optimal Steiner tree and $T_\gamma$ be the optimal $\gamma$-restricted Steiner tree. Then we have $$\frac{c(T_\gamma)}{c(T^*)} \leq 1 + \varepsilon.$$
\end{theorem}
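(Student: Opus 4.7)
The plan is to reduce to the case where every full component of $T^*$ is a full binary tree, then show that each such component admits a $\gamma$-restricted replacement whose cost is within a $(1+\varepsilon)$ factor of the component's own cost, and finally sum over components to obtain the theorem.

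First, I would perform a binarization step on $T^*$. Each full component of an optimal Steiner tree is a tree whose leaves are terminals and whose internal nodes are Steiner nodes. A standard reduction allows us to assume every internal node has degree exactly $3$: any Steiner node of degree $\geq 4$ can be split into several degree-$3$ Steiner nodes joined by zero-cost shadow edges (using shortest-path shortcuts in the underlying graph), and degree-$2$ Steiner nodes can be suppressed into adjacent edges. This transformation is cost-preserving and does not alter the terminal set joined, so we may assume each full component is a full binary tree.

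The heart of the proof is the following claim for a single full binary component $C$ with leaf set $R(C)$: there is a collection of full binary trees, each on at most $\gamma$ leaves drawn from $R(C)$, whose union covers $R(C)$ and whose total cost is at most $(1+\varepsilon)\,c(C)$. To construct such a collection, I root $C$ at an arbitrary leaf and, for each $i \in \{0,1,\ldots,k-1\}$ with $k := \lceil 1/\varepsilon \rceil$, define a candidate decomposition $\mathcal{D}_i$ by cutting at every internal node whose depth is congruent to $i \pmod k$. Because $C$ is binary, each connected fragment of $\mathcal{D}_i$ spans at most $2^k = \gamma$ leaves. I would then repair each fragment into a valid full binary tree on its own terminal subset by promoting a representative leaf and duplicating the cut edge together with the shortest path from the cut endpoint to that representative (available via shadow/metric completion). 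The extra cost incurred is then bounded by charging each original edge of $C$ to the decomposition in which its endpoint is a cut. An averaging argument over the $k$ candidate decompositions picks an index $i^*$ for which the replication cost is at most $c(C)/k \leq \varepsilon \cdot c(C)$, so $\mathcal{D}_{i^*}$ provides a $\gamma$-restricted replacement of cost at most $(1+\varepsilon)\,c(C)$.

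The main technical obstacle I expect is the charging step: an edge of $C$ lies between two nodes at consecutive depths, so it is adjacent to a cut in \emph{two} consecutive decompositions rather than just one, and the naive bound loses an extra factor of $2$. Resolving this requires a more careful selection of which endpoint of a cut edge pays for the reconnection, or a slightly refined decomposition that staggers the cuts across levels; this is the crux of the argument and the reason why the specific choice $\gamma = 2^{\lceil 1/\varepsilon \rceil}$ is tight for the claimed bound. Once the single-component claim is established, summing the bound over the full components of $T^*$ and recalling that we may restrict our optimal solution to the binarized form immediately yields $c(T_\gamma) \leq (1+\varepsilon)\,c(T^*)$.
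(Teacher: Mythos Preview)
The paper does not give its own proof of this theorem: it is stated as a cited result from Borchers and Du~\cite{BD1997} and used as a black box in the proof of Lemma~\ref{lem:rest}. There is therefore nothing in the paper to compare your argument against.

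On its own merits, your sketch follows the genuine outline of the original Borchers--Du proof: binarize each full component, cut at internal nodes whose depth lies in a fixed residue class modulo $k$, repair the fragments into $\gamma$-restricted components, and average over the $k$ residue classes. You are right that the delicate step is the charging argument for the reconnection cost; the original paper handles this by picking, at each cut node, a carefully chosen leaf in the subtree below (specifically a leaf reachable along a ``heaviest path'' so that each tree edge is duplicated in at most one of the $k$ decompositions). Your proposal identifies but does not resolve this point, so as a self-contained proof it is incomplete---but since the paper itself defers to the citation, no proof was required here.
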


Now, we prove an analogous result for STAP. For an instance of STAP, let $(S^*,F^*)$ be the optimal solution and $(S_\gamma,F_\gamma)$ be the optimal $\gamma$-restricted solution.

\begin{proof}[Proof of Lemma~\ref{lem:rest}]
Let $\varepsilon > 0$. Then we show that for $\gamma = 2^{\lceil{{\frac{1}{\varepsilon}}\rceil}}$, we have $$\frac{c(F_\gamma)}{c(F^*)} \leq 1 + \varepsilon.$$

Let $(S^*,F^*)$ be the optimal STAP solution. Each of its full components $(A_i^*,H_i^*)$ is a tree joining some set of terminals $R_i \subseteq R$. 

Since $(S^*,F^*)$ is optimal, and $(A_i^*,H_i^*)$ is a full component, it must be the cheapest way to join the nodes in $R_i$. That is, $(A_i^*,H_i^*)$ is an optimal solution to the Steiner tree instance on the graph $(R_i \cup (V \setminus R),L)$, with terminals $R_i$. 

By Theorem~\ref{thm:BorchersDu} above, there is a $\gamma$-restricted Steiner tree solution of cost at most $(1 + \varepsilon)c(H^*_i)$.

Applying this to each full component of $(S^*,F^*)$ yields a solution of cost at most $(1+\varepsilon)c(F^*)$ which only has full components joining at most $\gamma$ terminals.
\end{proof}

\subsection{The Decomposition Theorem}\label{sec:decompthm}
In this section, we prove our main decomposition theorem (Theorem~\ref{thm:decomposition}). In order to prove this result, we will follow the methods of~\cite{TZ2021}, and extend them from WTAP to Hyper-TAP. At a high level, the argument is as follows. We will build a directed graph $D$ whose vertices correspond to the hyper-links $\ell \in F$. For each up-link $u \in U$, we will choose a minimal set of hyper-links $F_u$ such that $u \in \drop_U(F_u)$. Each set $F_u$ will correspond to a directed path $A_u$ in $D$.

In~\cite{TZ2021}, the authors show that based on the choices for the minimal sets $F_u$, the dependency graph satisfies the following key properties, which allow for the selection $Q \subseteq U$ and a partition of $F$ into the desired $k$-thin components.

\begin{enumerate}
    \item The dependency graph is a branching.
    \item Let $(Z,A)$ be a connected component of the dependency graph. If the arc set of every directed path in $(Z,A)$ has a non-empty intersection with $A_u$ for at most $k$ up-links $u \in U$, then $Z$ is $(k+1)$-thin. 
\end{enumerate}

In~\cite{TZ2021}, Theorem 5 shows how one can use these two properties to select $Q \subseteq U$ and the partition of $F$ into the desired $k$-thin components, proving the Decomposition Theorem. 

We argue that the same properties hold in the hyper-link setting. The crucial property that allows us to extend the arguments from links to hyper-links is that if $u$ is an up-link and $\ell$ is a hyper-link, then the intersection of $P_u$ and $T_\ell$ is a subpath of $P_u$. Recall that $T_\ell := \bigcup_{a \in \ell} P_{a, \apex(\ell)}$.

We now describe how to construct for each $u \in U$ a minimal set $F_u$ such that $u \in \drop_U(F_u)$. Suppose $u = (t, b)$ where $t$ is an ancestor of $b$. We define $v_u$ to be the lowest ancestor of $t$, i.e., the ancestor farthest
away from the root $r$, such that $P_u$ is covered by hyper-links in
$$B_{v_u} := \{\ell \in F : \apex(\ell) \text{ is a descendant of } v_u\}$$
i.e., $P_u \subseteq \cup_{\ell \in B_{v_u}} P_\ell$. Then we choose $F_u \subseteq B_{v_u}$ minimal such that $P_u \subseteq \bigcup_{\ell \in F_u} T_\ell$.

There is a natural ordering on the hyper-links in $F_u$. First, we make some observations about how each hyper-link interacts with an up-link. Let $P_{u, \ell} := P_u \setminus \cup_{\bar \ell \in F_u \setminus \{\ell\} T_{\bar \ell}}$.

\begin{lemma}
For any up-link $u$, let $\ell \in F_u$. Then $P_{u,\ell}$ is non-empty and the edge set of a path.
\end{lemma}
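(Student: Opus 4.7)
My plan is to reduce the statement to a standard fact about minimal covers of an interval by sub-intervals, after a short preliminary observation about how a single $T_{\bar\ell}$ meets the vertical path $P_u$. The preliminary observation is: for every hyper-link $\bar\ell \in \mathcal{L}$, the edge set $T_{\bar\ell}\cap P_u$ is either empty or the edge set of a subpath of $P_u$. The reason is that $T_{\bar\ell}$ is a connected subgraph of the tree $T$: if two edges of $P_u$ lie in $T_{\bar\ell}$, then the unique $T$-path between any of their endpoints coincides with the portion of $P_u$ between them, and that portion must lie in $T_{\bar\ell}$ because $T$ has no cycles. Hence $T_{\bar\ell}\cap P_u$ is a connected sub-edge-set of the path $P_u$, i.e.\ a subpath.

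Applied to each $\ell\in F_u$, this observation makes $\{T_\ell\cap P_u : \ell\in F_u\}$ a family of subpaths of $P_u$ whose union is $P_u$, and minimality of $F_u$ turns this into a minimal such cover. I would then orient $P_u$ from the top $t$ down to the bottom $b$ and identify each $T_\ell\cap P_u$ with an interval $[a_\ell,b_\ell]$, and order $F_u=\{\ell_1,\dots,\ell_m\}$ so that $a_{\ell_1}\le\cdots\le a_{\ell_m}$. Two standard consequences of minimality then need verification: first, both sequences $(a_{\ell_i})$ and $(b_{\ell_i})$ are strictly increasing, since otherwise one interval would be contained in another and redundant; second, $b_{\ell_{i-1}}<a_{\ell_{i+1}}$ for each $1<i<m$, since otherwise $T_{\ell_{i-1}}\cup T_{\ell_{i+1}}$ would already cover $T_{\ell_i}\cap P_u$ and we could delete $\ell_i$.

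Given these, for $\ell=\ell_i$ with $1<i<m$, monotonicity implies that any $\ell_j$ with $j<i$ overlaps $[a_{\ell_i},b_{\ell_i}]$ only in an initial segment and any $\ell_j$ with $j>i$ only in a terminal segment; combined with the second property above this yields $P_{u,\ell}=(b_{\ell_{i-1}},a_{\ell_{i+1}})\cap[a_{\ell_i},b_{\ell_i}]$, a single non-empty subinterval, hence the edge set of a subpath of $P_u$. The boundary cases $i\in\{1,m\}$ are analogous — one of the two truncations simply does not occur. Non-emptiness by itself is in any case immediate from the minimality of $F_u$: if $P_{u,\ell}=\emptyset$, then $F_u\setminus\{\ell\}$ still covers $P_u$.

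The main obstacle is the preliminary observation — giving a crisp argument that the intersection of the vertical path $P_u$ with the generalized pendant subtree $T_{\bar\ell}$ is connected; this is what lets us pass from tree topology to one-dimensional interval combinatorics. Once that is in hand, the remaining steps are the standard analysis of a minimal cover of an interval by sub-intervals, applied to the subpaths $T_\ell\cap P_u$ for $\ell\in F_u$.
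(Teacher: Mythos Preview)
Your argument is correct. The paper does not actually supply a proof of this lemma; it is stated immediately after the paper isolates, as ``the crucial property,'' precisely the preliminary observation you prove --- that $T_{\bar\ell}\cap P_u$ is always a subpath of $P_u$ --- and the lemma is then left to the reader, presumably because it follows verbatim from the analogous lemma for ordinary links in~\cite{TZ2021} once that observation is in hand. Your reduction to a minimal cover of an interval by sub-intervals, together with the monotonicity of the left/right endpoints and the prefix/suffix description of the overlaps, is exactly the intended elaboration, so your approach coincides with what the paper has in mind.
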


For $\ell_1, \ell_2 \in F_u$, we define
$\ell_1 \prec \ell_2$ if and only if the edges in $P_{u,\ell_1}$ appear before the edges of $P_{u,\ell_2}$ on the $t-b$ path in $T$.

The arcs of the dependency graph are determined by the orderings on $F_u$ for $u \in U$. For every
up-link $u \in U$, let $\ell_1 \prec \cdots \prec \ell_q$ be the links in $F_u$. Then $$A_u := \{(\ell_1,\ell_2), \ldots, (\ell_{q-1}, \ell_q)\}$$ 

The arcs of the dependency graph consist of the union of $A_u$ over all $u \in U$. We will now show that the dependency graph has the two key properties which were introduced in~\cite{TZ2021}, which allow the selection of $Q \subseteq U$ and a partition of $F$ into the desired $k$-thin components. 

First, we consider property (1). This property simply follows from the minimality of $F_u$ for each $u \in U$ and has been shown in~\cite{CN2013}. 

\begin{lemma}[Cohen and Nutov \cite{CN2013}]
The dependency graph $D$ is a node-disjoint collection of arborescences.
\end{lemma}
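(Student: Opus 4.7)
The plan is to prove $D$ is a branching by establishing two claims: (1) every hyperlink $\ell \in F^*$ has in-degree at most one in $D$; and (2) $D$ is acyclic. These two claims together say exactly that each weakly connected component of $D$ is an arborescence, so $D$ is a node-disjoint collection of arborescences.

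For claim (1), I would fix $\ell \in F^*$ and suppose for contradiction that $\ell$ has two distinct incoming arcs $(\ell_1,\ell)\in A_{u_1}$ and $(\ell_2,\ell)\in A_{u_2}$. The key structural observation is that, because $T_\ell$ is a subtree of $T$ and $P_{u_i}$ is an ancestor-descendant path of $T$, the intersection $T_\ell \cap P_{u_i}$ is a contiguous subpath of $P_{u_i}$. Since $\ell$ is the immediate successor of $\ell_i$ in $\prec_{u_i}$, the top endpoint of $P_{u_i,\ell}$ coincides with the bottom endpoint of $P_{u_i,\ell_i}$, and this transition vertex is precisely the topmost vertex of $T_\ell\cap P_{u_i}$. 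The predecessor $\ell_i$ is therefore characterized as the unique hyperlink in $F_{u_i}$ whose covered subpath $P_{u_i,\ell_i}$ ends at this entry vertex and which contributes the edge of $P_{u_i}$ immediately above it. I would then argue from the minimality of $F_{u_i}$ that if two different hyperlinks in $F_{u_i}$ both covered that edge, one could be removed while preserving coverage of $P_{u_i}$, contradicting minimality; pushing this uniqueness across the two up-links $u_1,u_2$ forces $\ell_1 = \ell_2$ and rules out the assumed distinct incoming arcs.

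For claim (2), I would endow $F^*$ with a strict partial order against which every arc points downward. A natural choice is to rank $\ell$ by the depth of $\apex(\ell)$, with a tie-break via the depth of the lowest vertex of $T_\ell$. If $(\ell_1,\ell_2)\in A_u$, then $P_{u,\ell_1}$ lies strictly above $P_{u,\ell_2}$ on $P_u$, so the entry vertex of $\ell_2$ on $P_u$ is strictly below that of $\ell_1$; since this entry vertex belongs to $T_{\ell_2}$, this forces $\apex(\ell_2)$ to sit at or below $\apex(\ell_1)$ on the ancestor chain of $P_u$, with the tie-break handling the borderline case. Hence $D$ admits a topological order and is therefore acyclic.

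The main obstacle is claim (1). In the original TAP setting treated by Cohen--Nutov a link is a single edge, so the predecessor of $\ell$ on $P_u$ is forced by trivial geometry; for hyperlinks, $T_\ell$ can meet $P_u$ in a genuine subpath and one must carefully combine the subtree-intersection structure of $T_\ell \cap P_u$ with the minimality of $F_u$ to pin down a unique ``handoff'' hyperlink at the entry vertex of $\ell$ on $P_u$. Once this uniqueness is established, the acyclicity argument in claim (2) is comparatively routine.
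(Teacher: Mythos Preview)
Your overall plan---show every node has in-degree at most one and that $D$ is acyclic---is the right one, and your acyclicity argument is essentially correct (in fact cleaner than you suggest: for any arc $(\ell_1,\ell_2)\in A_u$ one has $P_{u,\ell_1}\subseteq T_{\ell_1}$ so $\apex(\ell_1)$ is an ancestor of the top of $P_{u,\ell_1}$, while $\apex(\ell_2)$, being the top of $T_{\ell_2}\cap P_u$, lies at or below the bottom of $P_{u,\ell_1}$; hence $\mathrm{depth}(\apex(\ell_1))<\mathrm{depth}(\apex(\ell_2))$ strictly, and no tie-break is needed).

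However, your argument for claim (1) has a genuine gap. You characterize the predecessor $\ell_i$ as ``the unique hyperlink in $F_{u_i}$ covering the edge just above the entry vertex,'' and then try to deduce $\ell_1=\ell_2$ by ``pushing this uniqueness across $u_1,u_2$.'' This does not work: uniqueness \emph{within} $F_{u_i}$ says nothing about agreement \emph{between} $F_{u_1}$ and $F_{u_2}$, which are different minimal covers and may well contain different hyperlinks covering the same tree edge. (Your minimality step is also not quite right: two links in $F_{u_i}$ can both cover that edge without either being removable, since each may be needed for other edges.) The missing observation is that the entry vertex---the top of $T_\ell\cap P_{u_i}$---is not merely ``some vertex depending on $u_i$'' but is exactly $\apex(\ell)$ whenever $\ell$ has a predecessor. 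Indeed, the existence of a predecessor forces $T_\ell\cap P_{u_i}$ to stop strictly below $t_i$, and since $T_\ell$ is a subtree with highest vertex $\apex(\ell)$, the top of the intersection must equal $\apex(\ell)$. Consequently the edge $\bigl(\apex(\ell),\text{parent}(\apex(\ell))\bigr)$ lies in $P_{u_i}$. Now the hypothesis you never invoke---that the paths $\{P_u:u\in U\}$ are pairwise edge-disjoint---finishes the job: this single tree edge can belong to at most one $P_u$, so $\ell$ can have a predecessor in at most one $F_u$, giving in-degree at most one. The conclusion is not that $\ell_1=\ell_2$, but that $u_1=u_2$.

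The paper itself does not spell this out; it attributes the lemma to Cohen--Nutov and remarks only that it follows from the minimality of the sets $F_u$. So there is no detailed proof to compare against, but the argument that actually goes through is the one above, and it differs from your sketch precisely at the point where you try to identify the two predecessors rather than the two up-links.
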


To prove property (2), we need the following lemma, which relies on the  particular choice of minimal $F_u$. Fix any connected component of the dependency graph $(Z,A)$.

\begin{lemma}\label{lem:anc-dec-relationship}
Let $\ell_1, \ell_2 \in Z$ with $V_{\ell_1} \cap V_{\ell_2} \neq \emptyset$. Then $\ell_1$ and $\ell_2$ have an ancestry relationship in the
arborescence $(Z,A)$, i.e., either $\ell_1$ is an ancestor of $\ell_2$ or $\ell_2$ is an ancestor of $\ell_1$.
\end{lemma}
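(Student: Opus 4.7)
The plan is to argue by contradiction. Suppose $\ell_1, \ell_2 \in Z$ satisfy $V_{\ell_1} \cap V_{\ell_2} \neq \emptyset$ but are incomparable in the arborescence $(Z,A)$. Let $\ell^*$ denote their least common ancestor in $(Z,A)$. Then $\ell^* \notin \{\ell_1,\ell_2\}$, and $\ell^*$ has two distinct children $c_1, c_2$ lying on the directed paths from $\ell^*$ to $\ell_1$ and to $\ell_2$, respectively. Because each $F_u$ is totally ordered by $\prec$ and therefore contributes only a chain to $A$ (so every hyper-link has at most one $\prec$-successor within a given $F_u$), the two outgoing arcs $(\ell^*, c_1)$ and $(\ell^*, c_2)$ must come from two distinct up-links $u_1 \neq u_2$, with $\ell^* \prec c_i$ in $F_{u_i}$ for $i \in \{1,2\}$.

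The heart of the argument is a nesting claim that I would establish by induction along directed paths in the arborescence: for every descendant $\ell$ of $c_i$ (including $c_i$ itself), the subtree $T_\ell$ of $T$ is contained in a specific subtree of $T$ hanging off $P_{u_i}$ strictly below the sub-path $T_{\ell^*} \cap P_{u_i}$. The base case $\ell = c_i$ combines three ingredients: the condition $\ell^* \prec c_i$ in $F_{u_i}$, which places $P_{u_i,\ell^*}$ strictly above $P_{u_i,c_i}$ on $P_{u_i}$; the basic observation recorded just before the lemma, that $T_{c_i} \cap P_{u_i}$ is a sub-path of $P_{u_i}$; and the minimality of $F_{u_i}$ together with $\apex(c_i) \in B_{v_{u_i}}$, which together pin $T_{c_i}$ into the required region (any deviation would let us drop $\ell^*$ or $c_i$ from $F_{u_i}$ while still covering $P_{u_i}$). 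The inductive step extends the claim across any subsequent arc $(\ell, \ell')$ in the arborescence by invoking the analogous argument for the up-link $u$ supplying that arc, carrying forward the region guaranteed by the previous step.

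With this claim in hand, $T_{\ell_1}$ lies in the region determined by the $c_1$-branch and $T_{\ell_2}$ in a disjoint region determined by the $c_2$-branch; both regions sit below $\apex(\ell^*)$ but hang off distinct branches of $T$. Hence $V_{\ell_1} \cap V_{\ell_2} = \emptyset$, contradicting the hypothesis. The main obstacle I anticipate is making the nesting claim precise when the arborescence path from $c_i$ down to $\ell_i$ passes through arcs supplied by many different up-links with different underlying tree paths $P_u$; the bookkeeping required to combine the interval constraints coming from these different $\prec$-orderings is the technical heart of the proof. I expect it to go through because the sub-path property of $T_\ell \cap P_u$, coupled with the minimality of every $F_u$ encountered along the chain, forces each new subtree to attach strictly inside the region already carved out by its predecessors.
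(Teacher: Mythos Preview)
The paper does not supply its own proof of this lemma; it simply records that the argument from \cite{TZ2021} for WTAP carries over verbatim to hyper-links, singling out the fact that $T_\ell \cap P_u$ is always a sub-path of $P_u$ as the property that makes the transfer work. Your plan is precisely that Traub--Zenklusen argument: assume incomparability, pass to the least common ancestor $\ell^*$ in the arborescence, observe that its two outgoing arcs lie in distinct $A_{u_i}$ (since each $F_u$ contributes a chain), and then prove an inductive nesting claim along each branch showing that every descendant's subtree $T_\ell$ is trapped inside a region of $T$ determined by the branch, with the two branches' regions disjoint. The bookkeeping you flag as the main obstacle---propagating the region constraint across arcs contributed by different up-links---is exactly the technical core of the \cite{TZ2021} proof, and your expectation that minimality of each $F_u$ together with the sub-path property suffices is correct.
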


Lemma~\ref{lem:anc-dec-relationship} has been shown in the context of WTAP in~\cite{TZ2021}. The proof extends verbatim to the case of Hyper-TAP, so we don't rewrite it here. Just as in~\cite{TZ2021},  Lemma~\ref{lem:anc-dec-relationship}, along with the property that $F_u$ is 2-thin for each $u \in U$ (which follows from the minimality of $F_u$) imply property (2).

\begin{lemma}\label{lem:property2}
Let $k$ be a positive integer, and $(Z,A)$ be a connected component of the dependency graph. If every directed path in $(Z,A)$ intersects at most $k$ sets $A_u$, then $Z$ is $(k+1)$-thin.
\end{lemma}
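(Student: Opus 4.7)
The plan is to prove this lemma by contradiction, imitating the corresponding WTAP argument of Traub and Zenklusen~\cite{TZ2021} and using the two ingredients highlighted just above the lemma: Lemma~\ref{lem:anc-dec-relationship} and the 2-thinness of each $F_u$ (which follows from minimality). Suppose toward contradiction that $Z$ is not $(k+1)$-thin. Then some $v \in V(T)$ is contained in $T_\ell$ for a family $S_v \subseteq Z$ with $|S_v| \geq k+2$. Since any two elements of $S_v$ share $v$ in their sets $V_\ell$, Lemma~\ref{lem:anc-dec-relationship} applied pairwise shows that the elements of $S_v$ are pairwise in ancestor-descendant relation in the arborescence $(Z,A)$. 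Consequently $S_v$ is a chain, and all its elements lie on a single directed path $P$ in $(Z,A)$ from the shallowest to the deepest.

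Next I would exploit the structure of the arc sets $A_u$. Each $A_u$ is itself a directed path (the $\prec$-ordering of $F_u$) in the arborescence, so for any directed path $P$ in $(Z,A)$ the intersection $A_u \cap E(P)$ is a contiguous subpath of $P$. Label each arc of $P$ by the up-link $u$ such that $A_u$ contains that arc; by the hypothesis of the lemma at most $k$ distinct labels appear on $E(P)$. The path $P$ has at least $|S_v|-1 \geq k+1$ arcs, so in particular the $k+1$ outgoing arcs of the first $k+1$ chain vertices $\ell^{(1)},\dots,\ell^{(k+1)}$ (in $P$-order) cannot all carry distinct labels. Pigeonhole then produces two chain vertices $\ell^{(i)},\ell^{(j)}$ with $i<j$ whose outgoing arcs both lie in a common $A_u$.

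Contiguity of $A_u \cap E(P)$ forces every arc of $P$ strictly between those two outgoing arcs to also lie in $A_u$. In particular, every vertex of $P$ from $\ell^{(i)}$ through $\ell^{(j)}$ (including all chain vertices $\ell^{(i)}, \ell^{(i+1)}, \dots, \ell^{(j)}$ in between, as well as any intermediate non-chain vertices from $Z \setminus S_v$) lies in $F_u$. All of $\ell^{(i)}, \ldots, \ell^{(j)}$ cover $v$, so they give at least $j-i+1 \geq 2$ elements of $F_u$ whose subtrees contain $v$. The 2-thinness of $F_u$ says this count is at most 2, so we must have $j=i+1$ and moreover no further chain vertex can lie in the $F_u$-segment. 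I expect the main obstacle to be ruling out the case where this constraint is ``shielded'' by intermediate non-chain vertices in $Z \setminus S_v$ appearing on $P$ between consecutive chain vertices: I plan to handle this exactly as in the WTAP proof of~\cite{TZ2021}, where one shows that any such shielding configuration extends the contiguous $A_u$-segment in a way that eventually still captures a third chain vertex from $S_v$ in $F_u$. Once the shielding case is excluded, the labels of the $k+1$ outgoing arcs must be pairwise distinct, contradicting the assumption that $P$ meets at most $k$ sets $A_u$, and therefore $Z$ is $(k+1)$-thin.
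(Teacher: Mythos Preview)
The paper does not write out its own proof here; it simply asserts that Lemma~\ref{lem:anc-dec-relationship} together with the 2-thinness of each $F_u$ yield the result exactly as in \cite{TZ2021}. So your overall plan---reduce $S_v$ to a chain via Lemma~\ref{lem:anc-dec-relationship} and then exploit 2-thinness along the resulting directed path---matches the route the paper intends.

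However, the argument you actually sketch has a genuine gap at the final step. After pigeonhole hands you two chain vertices $\ell^{(i)},\ell^{(j)}$ whose outgoing arcs share a label $u$, you correctly deduce $j=i+1$. But this is \emph{not} a contradiction, and it does not let you conclude that the $k+1$ outgoing-arc labels are pairwise distinct. Concretely: suppose $\ell^{(i)}$ and $\ell^{(i+1)}$ are adjacent on $P$ and the next vertex on $P$ is some $p\notin S_v$. Then both arcs $(\ell^{(i)},\ell^{(i+1)})$ and $(\ell^{(i+1)},p)$ can lie in the same $A_u$, with $F_u\cap S_v=\{\ell^{(i)},\ell^{(i+1)}\}$, and nothing in the 2-thinness of $F_u$ is violated. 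So a single label can legitimately appear on two consecutive outgoing arcs of chain vertices, and pigeonhole on $k$ labels over $k+1$ such arcs yields no contradiction. Your proposed resolution of the ``shielding'' case---that the configuration ``extends the contiguous $A_u$-segment in a way that eventually still captures a third chain vertex''---is not correct: there is no mechanism forcing $F_u$ to extend past $p$ to reach $\ell^{(i+2)}$.

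The TZ2021 argument does not go through ``outgoing-arc labels are pairwise distinct.'' What you are missing is the sharper structural consequence of minimality: the (at most two) elements of $F_u$ whose subtrees contain $v$ are \emph{consecutive} in the $\prec$-order of $F_u$, hence adjacent in the arborescence. This adjacency constraint interacts with the way the intervals $A_u\cap E(P)$ cover the gaps between successive chain vertices, and it is this interaction---not a raw pigeonhole on outgoing arcs---that drives the bound $|S_v|\le k+1$. You should revisit the TZ2021 proof and carry that part over carefully rather than substituting the coarser counting you have written.
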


Thus, the dependency graph satisfies the 2 properties described earlier in the section. We can use the identical technique as in~\cite{TZ2021} to select a set of up-links $Q \subseteq U$ with $c(Q) \leq \varepsilon$. We will delete the arc sets $A_u$ for $u \in Q$ from the dependency graph, and the connected components of the remaining digraph will each be $(k+1)$-thin. We reproduce the proof below.
\begin{theorem}
Given an instance of Hyper-TAP  $(T,\mathcal{L})$, suppose $U$ is an up-link solution such that the sets $P_u$ are pairwise edge-disjoint for $u \in U$. Suppose $F \subseteq \mathcal{L}$ is any solution. Then for any $\varepsilon > 0$, there exists a partition $\mathcal{Z}$ of $F$ into parts so that:

\begin{itemize}
    \item For each $Z \in \mathcal{Z}$, $Z$ is $k$-thin for $k = \lceil{1/\varepsilon}\rceil$.
    \item There exists $Q \subseteq U$ with $c(Q) \leq \varepsilon \cdot w(U)$, such that for all $u \in U \setminus Q$, there is some $Z \in \mathcal{Z}$ with $u \in \text{drop}_U(Z)$. That is, $U \setminus Q \subseteq \bigcup_{Z \in \mathcal{Z} } \drop_U(Z)$.
\end{itemize}
\end{theorem}

\begin{proof}
Let $k := \lceil \frac{1}{\varepsilon} \rceil$. We will construct an arc labeling for each connected component $(Z,A)$ of the
dependency graph. The arcs in the same set $A_u$ will receive the same label. 

For each directed path $(F_u,A_u)$ which begins at the root of the arborescence $(Z,A)$, we set the labels of the arcs in this path to be 0. For a directed path $(F_u,A_u)$ which does not begin at the root, let $\ell$ be its starting node and suppose the arc entering $\ell$ has label $j \in \mathbb{Z}_{\geq 0}$. We set the labels of arcs in $A_u$ to be $j+1$. Since $(Z,A)$ is an arborescence, this labeling is well-defined. 

For $i \in \{0, \ldots, k-1\}$, let $Q_i \subseteq U$ be the set of up-links in $U$ for which the arcs in $A_u$ have a label $j$ with
$j \equiv i \text{ } (\text{mod } k)$. Since ${Q_0,Q_1, \ldots ,Q_{k-1}}$ is a partition of $U$, the average cost of the sets $Q_i$ is $c(U) / k$. Hence, the cheapest set $Q_i$ has cost at most $c(U) \leq c(U) / k$, and we set $Q := Q_i.$

Based on the choice of $Q$, we obtain a partition of $F$ by removing from the dependency graph all arcs in $A_u$ where $u \in Q$. Then, the links in the connected components of the resulting directed graph form the partition $\mathcal{Z}$ of $F$. By the choice of labeling, every directed path in the dependency graph after deleting these arcs intersects at most $k-1$ distinct sets $A_u$. Therefore, by Lemma~\ref{lem:property2}, each part $Z \in \mathcal{Z}$ is $k$-thin as desired.
\end{proof}

\subsection{Dynamic Programming to find the best $k$-thin component}
\label{sec:dp}
In this section, we prove that we can find the $k$-thin subset of hyper-links $Z \subseteq \mathcal{L}$ minimizing $\frac{c(Z)}{c(\drop_U(Z))}$ in polynomial time using dynamic programming. A similar result was needed in~\cite{TZ2021}. However, in general Hyper-TAP, there may be exponentially many hyper-links, so we cannot enumerate over all ${|\mathcal{L}| \choose k}$ sets efficiently. Thus, we again make use of the results in Section~\ref{sec:effects_of_restriction}. In our algorithm, we work with an instance of $\gamma$-restricted Hyper-TAP for some constant $\gamma$. Therefore, there are at most $O(n^\gamma)$ hyper-links overall. This, along with the fact that we optimize over $k$-thin subsets for a constant $k$, will be necessary for the efficiency of the dynamic program.

Recall that we seek to find the minimizer $\rho^*$ of $\frac{c(Z)}{c(\drop_U(Z))}$ over all $k$-thin subsets $Z \subseteq \mathcal{L}$. Using binary search, we can reduce this problem to deciding whether a given $\rho$ is greater or less than $\rho^* \in [0,1]$. 

For a given $\rho$ and $Z \subseteq \mathcal{L}$, define $$\slack_\rho(Z) := \rho \cdot c(\drop_U(Z)) - c(Z).$$

Notice that the question of whether $\frac{c(Z)}{c(\drop_U(Z))} \leq \rho$ is equivalent to whether $\slack_\rho(Z) \geq 0$. 

\begin{lemma}\label{lem:dp}
The maximizer among all $k$-thin sets of hyper-links 
$$\max_{Z \subseteq \mathcal{L}} \{\slack_\rho(Z) : Z \text{ is $k$-thin}\}$$ 
can be found efficiently by dynamic programming. 
\end{lemma}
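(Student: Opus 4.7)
The plan is a bottom-up dynamic program over the tree $T$ rooted at an arbitrary vertex $r$. Because the up-link solution $U$ from Lemma~\ref{lem:2approx} covers each tree edge exactly once and has pairwise edge-disjoint paths, for every non-root $v$ there is at most one $u_v \in U$ whose path $P_{u_v}$ contains the edge $(v, \mathrm{parent}(v))$. At each $v$, I keep a table indexed by pairs $(B, \sigma)$, where $B \subseteq \mathcal{L}$ with $|B| \leq k$ is the set of hyper-links in the sought $Z$ whose apex is a strict ancestor of $v$ and whose $T_\ell$ meets the subtree $T_v$ (equivalently, the hyper-links of $Z$ that cover the edge above $v$), and $\sigma \in \{0,1\}$ records whether every edge of $P_{u_v}$ strictly below $v$ has already been covered by the partial solution chosen inside $T_v$. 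The entry $f(v, B, \sigma)$ stores the maximum value of $\rho \cdot c(D) - c(Z_v)$ over all admissible choices of hyper-links $Z_v$ with $\apex(\ell) \in T_v$, where $D \subseteq U$ collects up-links $u$ with $t(u) \in T_v$ whose $P_u$ is covered by $Z_v \cup B$.

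At the transition for an internal $v$ with children $v_1, \ldots, v_c$, I enumerate a subset $A \subseteq \{\ell \in \mathcal{L} : \apex(\ell) = v\}$ with $|A| + |B| \leq k$ (enforcing $k$-thinness at $v$) together with alive bits $\sigma_i \in \{0,1\}$ for each child. The crossing set at each child is then forced to $B_i := \{\ell \in A \cup B : \ell \text{ has a leaf in } T_{v_i}\}$. An up-link $u_{v_i}$ with $t(u_{v_i}) = v$ is dropped exactly when $B_i \neq \emptyset$ (so the edge $(v, v_i)$ is covered) and $\sigma_i = 1$ (so the rest of $P_{u_{v_i}}$ inside $T_{v_i}$ is covered), contributing $\rho \cdot c(u_{v_i})$ to the objective; all other up-links in $D$ are either already recorded in descendant tables through their own $\sigma_i$'s or still propagating and will be resolved at their eventual top. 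The output bit $\sigma$ is constrained to match the alive status inherited from the unique child $v_{i^*}$ whose subtree $P_{u_v}$ enters (taking $\sigma = 1$ automatically when $b(u_v) = v$, or when $v = r$ and no $u_v$ exists). The desired maximum is $\max_\sigma f(r, \emptyset, \sigma)$.

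The main obstacle is that the drop indicator for $u$ depends on coverage along the entire path $P_u$, which naively demands an exponentially large state to record which edges of $P_u$ are already covered. The fix is that the pairwise edge-disjointness of $\{P_u : u \in U\}$ ensures at most one up-link crosses the edge above any given vertex, so the coverage history of all ``active'' up-links collapses into the single bit $\sigma$. Combined with $|\mathcal{L}| = O(n^\gamma)$ coming from the $\gamma$-restriction set up in Algorithm~\ref{algo:localgreedy}, and with a standard left-to-right sweep through the children of each $v$ to avoid an exponential branching in the child count, this yields $O(n \cdot |\mathcal{L}|^k) = O(n^{1+\gamma k})$ states, each processed in polynomial time, giving an overall polynomial-time algorithm that returns the claimed maximizer of $\slack_\rho$ over all $k$-thin subsets.
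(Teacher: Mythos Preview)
Your proposal is correct and follows essentially the same dynamic program as the paper: a bottom-up DP indexed by the current vertex, the set of at most $k$ hyper-links crossing into the current subtree, and a single bit recording the coverage status of the unique up-link passing through the edge above (exploiting the edge-disjointness of the $P_u$). The only cosmetic differences are that the paper phrases the bit as an input constraint $x\in\{+,-\}$ rather than an output record $\sigma$, and that you make explicit the left-to-right sweep over the children to avoid exponential branching; both variants yield the same $n^{O(\gamma k)}$ running time.
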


\begin{proof}
The proof is an extension from~\cite{TZ2021} which proves the result for $\gamma = 2$. We denote by $D_v \subseteq V$ the set of all descendants of $v$ in $G$, $X[D_v]\subseteq X$ the set of hyper-links in $X \subseteq \mathcal{L}$ with all endpoints in $D_v$, $\delta_X(D_v) \subseteq X$ the set of links with at least one endpoint in $D_v$ and at least one not in $D_v$.

The dynamic program maintains a triple $\{v, Y, x\}$:
\begin{itemize}
    \item $v\in V$ represents the subtree $D_v$ we are considering.
    \item A set of hyper-links $Y \subseteq \delta_\mathcal{L}(D_v)$ with $|Y| \leq k$. These are the hyper-links that do not interact solely with $D_v$, but nevertheless affect the choices in the subproblem rooted at $v$. However, since we are seeking a $k$-thin set of hyper-links, and each member of $\delta_\mathcal{L}(D_v)$ goes through $v$, we have $|Y| \leq k$. 
    \item $x\in \{+, - \}$. Note that since the sets $P_u$ are disjoint for $u \in U$, there is at most one up-link in $\delta_U(D_v)$. If $x = +$, the $k$-thin set is required to cover the edges of $P_u$ that are under $v$. If $x = -$, there is no requirement.
\end{itemize}

We create a table $\mathcal{T}$ with an entry for each such triple. The dimensions of this table are $\mathcal{T} \subseteq V \times 2^{\mathcal{L}} \times \{+,-\}$, and since $|\mathcal{L}| \leq O(n^\gamma)$, this table has polynomial size for any constant $k$. 

We will proceed to fill this table from the leaves up to the root of the tree and use previously computed entries to ensure that we can fill each entry in polynomial time. 

Let
\[
    \text{slack}_\rho(Z, Y, v) := \rho \cdot c(\drop_{U[D_v]}(Z\cup Y))-c(Z).
\]
If $x = -$ then $$\mathcal{T}[v,Y,x] := \max\{\slack_\rho(Z,Y,v) : Z\subseteq \mathcal{L}[D_v], \; Z \cup Y \text{ is } k\text{-thin}\} ,$$ 

and if $x = +$, then $\mathcal{T}[v,Y,x]$ is the solution to the same optimization problem, with the additional constraint that $Z \cup Y$ must cover the edges of the unique up-link going through $v$, if it exists. Let $Z(v,Y,x)$ be the associated maximizer. Notice that the answer to the original problem is in the table entry $\mathcal{T}[r,\emptyset,-]$.

Fix an entry $\mathcal{T}[v, Y, x]$ and let the children of $v$ be $v_1, v_2, \ldots, v_m.$ We partition the problem into computing $Z[v_i, Y_i, x]$ for some choices of $Y_i$ and $x$. We enumerate to find the correct $Y_i$ for each $D[v_i]$. We use the following rule to partition $Z \cup Y$:
\begin{itemize}
    \item $Z_i := Z \cap \mathcal{L}[D_{v_i}]$, which is the set of hyper-links that are contained fully in some $D_{v_i}$
    \item $\overline{Y} := Y \cup \{\ell \in Z: v \in V_\ell \}$, which is the set of hyper-links with at least one endpoint in some $D_{v_i}$ and at least one endpoint outside of $D_{v_i}$.
\end{itemize}

Note that the set $Z \cup Y$ should be $k$-thin because we seek a $k$-thin solution. Since $T_\ell$ contains $v$ for each $\ell \in \overline{Y}$ we have $|\overline{Y}| \leq k$. We consider the following set $\mathcal{Y}$ which is the set of all feasible $\overline{Y}$:
\begin{itemize}
    \item $|\overline{Y}| \le k$;
    \item $\overline{Y} \cup \delta_L(D_v) = Y$;
    \item Each hyperlink $\ell \in \overline{Y}$ goes through vertex $v$;
    \item If $x$ equals $+$ and if the link $u \in \delta_U(D_v)$ interacts with some subtree $D_{v_i}$, i.e at least one endpoint of $u$ is in some $D_{v_i}$, then we have $\overline{Y} \cap \delta_L(D_{v_i}) \neq \emptyset$.
\end{itemize}

Since $|\overline{Y}|\le k$, we can bound the size of $\mathcal{Y}.$ For $\gamma$-restricted hyper-links, the choice of one hyper-link is $\sum_{i=1}^{\gamma} \binom{n}{i} \le \gamma n^\gamma$ for constant parameter $\gamma$. The size of $\mathcal{Y}$ satisfies $|\mathcal{Y}| \le \binom{\gamma n^\gamma}{k} \le \gamma^k n^{\gamma k} = O(n^{k \gamma})$, which is polynomially tractable.
Thus, we can enumerate among all $\overline{Y}$ that satisfy the above four conditions and we obtain all information we need before breaking our dynamic program into sub-problems for $D_{v_i}.$

Let's fix some set $\overline{Y} \in \mathcal{Y},$ then we have
\[
    \text{slack}_\rho(Z_{\overline{Y}}, Y, v) = \sum_{i = 1}^m \text{slack}_\rho (Z_i, \overline{Y} \cap \delta_L(D_{v_i}, v_i)) + \rho \cdot \sum_{u_i \in \text{Drop}_U(Z_i \cup \overline{Y})} c(u_i) - c(\overline{Y}/ Y).
\]
To compute $Z_i$, we need to determine whether $(v_i, Y_i, +)$ is feasible. There are three cases:
\begin{itemize}
    \item $\delta_U(D_{v_i}) = \emptyset:$ then $(v_i, Y_i, +)$ is infeasible due to the previous definition, we only need to compute $Z(v_i, Y_i, -)$;
    \item The up-link $u_i$ only interacts with $v$ in vertex set $V/D_{v_i}$: then we need to compare if we want to drop $u_i$ or not, i.e: if $\text{slack}_\rho(Z_i^+, Y_i, v_i) + \rho \cdot w(u_i) \ge \text{slack}_\rho(Z_i^-, Y_i, v_i)$, then we will choose $Z(v_i, Y_i, +)$, otherwise we choose $Z(v_i, Y_i, -).$
    \item The up-link $u_i$ interacts with $V/D_{v}$, then we choose same sign for $Z(v_i, Y_i, x)$ as $(v, Y, x).$
    
\end{itemize}

We enumerate over all choices for $\overline{Y}$ and choices of $x$ for each child $v_i$, and pick the best of these cases. Thus, we can compute $Z[v,Y,x]$ in polynomial time by relying on solutions to sub-problems on the children of $v$. By proceeding from the leaves to the root, we can compute the value of $\mathcal{T}[r,\emptyset,-]$ and the associated maximizer as desired.
\end{proof}

\subsection{A ($1.5+\varepsilon$)-Approximate Local Search Algorithm}
In this section, we show how to achieve an approximation algorithm for STAP with approximation ratio $(1.5 + \varepsilon)$. The main idea behind the improvement is to consider dropping links that were added in previous iterations of the local search algorithm. Contrast this with Algorithm~\ref{algo:localgreedy}, which obtains a $1+ \ln(2) + \varepsilon$ approximation by merely dropping the up-links in the initial 2-approximate solution. 

At a high level, the algorithm works as follows. Recall that we are given an instance of STAP involving a graph $G = (V,E)$ with a set of terminals $R \subseteq V$, a tree $T = (R, E(T))$ spanning $R$ and non-negative costs $w: L \to \mathbb{R}_{\geq 0}$. The algorithm will at all times maintain a feasible solution $F \subseteq L$ and witness sets $W_f$ for $f \in F$ such that $U := \bigcup_{f \in F} W_f$ is feasible. Each witness set consists of up-links and has size at most two. 

Initially, we begin with an arbitrary feasible STAP solution $F_0$, and its associated witness sets. In each iteration, we add a collection of links to the solution along with their associated witness sets, drop any up-links from witness sets which are not necessary for the feasibility of $U$, and finally delete any links whose witness sets have become empty.  

Since in Algorithm~\ref{algo:localgreedy}, we add a set of hyper-links to the current solution in each iteration, one might initially try to associate a witness set to each hyper-link. However, a hyper-link joining $k$ terminals requires $k$ up-links in its witness set, resulting in a worse approximation guarantee of $H_k + \varepsilon$. 

Therefore, a key idea is to show how we can construct witness sets of size at most 2 for each link in a given STAP solution. We will use the idea behind the 2-approximate up-link solution in Lemma~\ref{lem:2apx} to obtain a natural choice for the witness sets for each link, obtained by choosing the up-links responsible for covering the subpaths of the Euler tour containing $f$. Note that a key difference in our setting is that a single up-link may be contained in the witness set of several links, rather than just one as in the case of WTAP. 

We now turn to defining how the witness sets are constructed. Given a STAP solution $(S,F)$, we assume for simplicity that $(S,F)$ consists of a single full component (otherwise, we enact the same procedure for each full component in the solution). Then $(S,F)$ is a tree and admits an eulerian tour traversing each edge in $F$ exactly twice. This tour induces an ordering on the terminals in $S \cap R$, say $\{r_1, \ldots, r_k\}$. Notice that a particular link $f \in F$ is traversed exactly twice; let's say it is used on the Euler subpath from $r_{i}$ to $r_{i+1}$, and then again on the subpath from $r_j$ to $r_{j+1}$ (where we take $r_{k+1} := r_1$). We define the witness set $W_f$ to consist of the two up-links $(r_i, \lca(r_i,r_{i+1}))$ and $(\lca(r_j,r_{j+1}),r_{j+1})$.

\begin{figure}[h]

		\centering
		\begin{tikzpicture}[scale=0.8]

		

		
		\draw [-] [red, line width=0.4mm,xshift=0 cm] plot [smooth, tension=1] coordinates {(3.5, 3.5) (1, 2)};
            \draw [-] [red, line width=0.4mm,xshift=0 cm] plot [smooth, tension=1] coordinates {(1, 2) (0, 0)};
            \draw [-] [red, line width=0.4mm,xshift=0 cm] plot [smooth, tension=1] coordinates {(1, 2) (2, 0)};
            \draw [-] [red, line width=0.4mm,xshift=0 cm] plot [smooth, tension=1] coordinates {(3.5, 3.5) (6, 2)};
            \draw [-] [red, line width=0.4mm,xshift=0 cm] plot [smooth, tension=1] coordinates {(5, 0) (6, 2)};
            \draw [-] [red, line width=0.4mm,xshift=0 cm] plot [smooth, tension=1] coordinates {(7, 0) (6, 2)};
            \draw [dashed] [blue, line width=0.3mm,xshift=0 cm] plot [smooth, tension=1] coordinates {(3.5, -3.5) (1, -2)};
            \draw [dashed] [blue, line width=0.3mm,xshift=0 cm] plot [smooth, tension=1] coordinates {(0.5, -1) (0, 0)};
            \draw [dashed] [blue, line width= 1mm,xshift=0 cm] plot [smooth, tension=1] coordinates {(0.5, -1) (1, -2)};
            \draw [dashed] [blue, line width=0.3mm,xshift=0 cm] plot [smooth, tension=1] coordinates {(1, -2) (2, 0)};
            \draw [dashed] [blue, line width=1mm,xshift=0 cm] plot [smooth, tension=1] coordinates {(3.5, -3.5) (6, -2)};
            \draw [dashed] [blue, line width=0.3mm,xshift=0 cm] plot [smooth, tension=1] coordinates {(5, 0) (6, -2)};
            \draw [dashed] [blue, line width=0.3mm,xshift=0 cm] plot [smooth, tension=1] coordinates {(7, 0) (6, -2)};

		

            \draw[black,fill=white] (3.5, 3.5) ellipse (0.15 cm  and 0.15 cm);	
            \draw[black,fill=white] (1, 2) ellipse (0.15 cm  and 0.15 cm);	
            \draw[black,fill=white] (6, 2) ellipse (0.15 cm  and 0.15 cm);	
		\draw[black,fill=white] (0, 0) ellipse (0.15 cm  and 0.15 cm);	
            \draw[black,fill=white] (2, 0) ellipse (0.15 cm  and 0.15 cm);	
            \draw[black,fill=white] (5, 0) ellipse (0.15 cm  and 0.15 cm);	
            \draw[black,fill=white] (7, 0) ellipse (0.15 cm  and 0.15 cm);	
            \draw[blue,fill=white] (3.5, -3.5) ellipse (0.15 cm  and 0.15 cm);	
            \draw[blue,fill=white] (1, -2) ellipse (0.15 cm  and 0.15 cm);	
            \draw[blue,fill=white] (6, -2) ellipse (0.15 cm  and 0.15 cm);	
            \draw[blue,fill=white] (0.5, -1) ellipse (0.15 cm  and 0.15 cm);
            
		\draw [dashed] [green, line width=0.4mm,xshift=0 cm] plot [smooth, tension=1] coordinates {(0, 0) (0.3, 1.5) (1, 2)};
            \draw [dashed] [green, line width=0.4mm,xshift=0 cm] plot [smooth, tension=1] coordinates {(2, 0)  (3.5, 3.5)};
            
            \draw [dashed] [green, line width=0.4mm,xshift=0 cm] plot [smooth, tension=1] coordinates {(7, 0) (6.2, 2.5) (3.5, 3.5)};

		\node (1) at (3.1, 3.6) {{$r$}};
            \node[blue] (1) at (0.35, -1.8) {{$f_1$}};
            \node[blue]  at (5.1, -3.1) {{$f_2$}};
            \node[green]  at (-0.23, 1.5) {{$\ell_1$}};
            \node[green]  at (3.2, 1.4) {{$\ell_2$}};
            \node[green]  at (6.7, 2.6) {{$\ell_3$}};


		

		

		\end{tikzpicture}
		\label{fig:witness_sets}
            \caption{The above illustrates the construction of witness sets for each link in a given STAP solution. The red tree is the tree to be augmented and the blue dashed lines are links in a feasible STAP solution $F$. The witness sets corresponding to links $f_1$ and $f_2$ are $W_{f_1} = \{\ell_1, \ell_3\}$ and $W_{f_2} = \{\ell_2, \ell_3\}$ respectively.}
\end{figure}
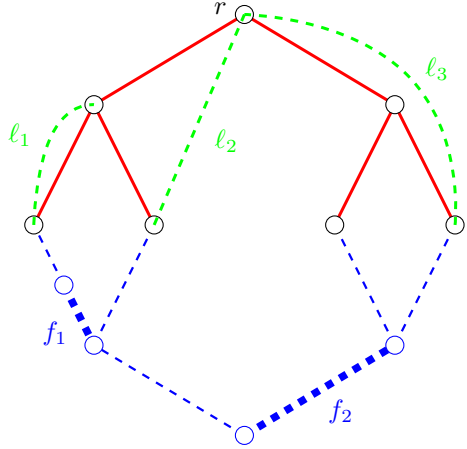

With this choice of witness set for each link $f \in F$, it is clear that each set has at most two up-links. For a set of links $C \subseteq F$, let $U_C$ be the union of all up-links corresponding to links in $C$. One can show, following the same argument as the proof of Lemma~\ref{lem:2apx}, that the set of links $C \subseteq F$ and the set of up-links $U_C \subseteq L$ cover the same tree edges. This ensures that the union of all up-links in the sets $W_f$ for $f \in F$ remains a feasible solution throughout the algorithm. 

Now, we define a potential function $\Phi$ which maps a solution $F$ and its witness sets to a non-negative real number.

$$\Phi(F) := \sum_{f: |W_f| = 1} w(f) + \frac{3}{2}\sum_{f: |W_f| = 2} w(f).$$

We also define a weight function for the up-links in witness sets. For an up-link $u$ in any $W_f$, we define $$\bar w(u) := \sum_{f: u \in W_f} \frac{w(f)}{|W_f|}.$$

We now turn to formally defining our algorithm (Algorithm~\ref{algo:localsearch}). Recall that we are given an instance of STAP involving a graph $G = (V,E)$ with a set of terminals $R \subseteq V$, a tree $T = (R, E(T))$ spanning $R$ and non-negative costs $w: L \to \mathbb{R}_{\geq 0}$.

\begin{algorithm}[h]
\caption{Local search algorithm for STAP}
\textbf{Input:} A shadow-complete, metric-complete instance of STAP with graph $G = (V,E)$, tree $T = (R,E(T))$, links $L = E \setminus E(T)$, and $c: L \to \mathbb{R}$. Also a constant $1 \geq \varepsilon > 0$.\\
\textbf{Output:} A solution $F \subseteq L$ with $c(F) \leq (1.5 + \varepsilon)OPT$.\\~

\begin{enumerate}
\item Compute an arbitrary STAP solution $F \subseteq L$. Construct witness sets $W_f$ for each $f \in F$. 
\item Let $\varepsilon' := \frac{\varepsilon/2}{1.5 + \varepsilon/2}$ and $\gamma := 2^{\lceil{1/\varepsilon'}\rceil}$.
\item For each $S \subseteq R$ where $|S| \leq \gamma$, compute the cheapest full component joining $S$ and denote the cost by $c_S$.
\item Create an instance of $\gamma$-restricted Hyper-TAP on tree $T = (R,E(T))$ with hyper-links $\mathcal{L} = \{\ell_S : S \subseteq R, |S| \leq \gamma\}$. Set the cost of hyper-link $\ell_S$ to be $c_S$.
\item Let $k := \lceil 8/\varepsilon \rceil$
\item Iterate the following as long as $\Phi(F)$ decreases in each iteration by at least a factor $(1 - \frac{\varepsilon}{12n})$:
\begin{itemize}
    \item Compute the $k$-thin subset of hyper-links $Z \subseteq 2^\mathcal{L}$ maximizing $\bar w(\drop_U(Z)) - 1.5 \cdot w(Z)$, where $U := \bigcup_{f \in F} W_f$ . \item Update the witness sets by replacing each $W_f$ with $W_f \setminus \drop_U(Z)$.
    \item Update $F$ by adding in all links contained in each full component of $Z$.
    \item Shorten up-links in $W_f$ to ensure that their coverage is disjoint. If $W_f = \emptyset$ for some $f \in F$, then remove $f$ from $F$.
\end{itemize}
\item \textbf{Return} Return $F$.
\end{enumerate}
\label{algo:localsearch}
\end{algorithm}

We begin with an arbitrary STAP solution $F_0$ and its associated witness sets. Our algorithm iterates the following procedure. It finds the $k$-thin subset of hyper-links $Z$ which maximizes $\bar w(\drop_U(Z)) - 1.5w(Z)$, where $U := \bigcup_{f \in F} W_f$, and adds the links in these components to the current solution. It then updates the collection of witness sets by adding the witness sets corresponding to the new links, removing any up-links from witness sets which are covered by these new links, and shortening up-links in witness sets to ensure that their coverages are disjoint. Finally, it deletes any link from the solution $F$ with an empty witness set. This procedure is iterated as long as the cost of the current STAP solution drops by a sufficient amount in each iteration.

Before we turn to the analysis of the algorithm, we first show that we can efficiently find a $k$-thin component maximizing $\bar w(\drop_U(Z))-1.5w(Z)$.

\begin{lemma}
Given an instance of $\gamma$-restricted Hyper-TAP and an up-link solution $U$, there is a polynomial time algorithm computing a $k$-thin collection of hyper-links maximizing $\bar w(\drop_U(Z))-1.5w(Z)$.
\end{lemma}
\begin{proof}
Define a new Hyper-TAP instance with the same tree $T$ and hyper-links $\mathcal{L}$, but with a new weight function $\Tilde{w}$. For a hyper-link $\ell$, if $\ell \in U$, we define $\Tilde{w}(\ell) := w(\ell)$ and $\Tilde{w}(\ell) := 1.5w(\ell)$ otherwise. Now, applying Lemma~\ref{lem:dp} with $\rho = 1$ on this new instance Hyper-TAP returns the desired $k$-thin maximizer. 
\end{proof}

Now we prove the correctness of the algorithm, i.e., that the returned solution is feasible for STAP.

\begin{lemma}
Both $F$ and $U := \bigcup_{f \in F} W_f$ are feasible STAP solutions before and after each iteration of Algorithm 2. In particular, when the algorithm terminates, it returns a feasible STAP solution.
\end{lemma}
\begin{proof}

By definition, $F$ is initially a feasible solution. Also, $U$ is a feasible solution initially by the proof of Lemma~\ref{lem:2approx}. 

A link is only removed from $U$ when it is contained in $\drop_U(Z)$ for some set of hyper-links $Z \subseteq \mathcal{L}$ whose corresponding links $C \subseteq L$ were added to the solution with their accompanying witness sets $W_f$ for $f \in C$. By definition of $\drop_U(Z)$, the only up-links dropped from $U$ are those not necessary for the feasibility of $U$ after the links in $C$ are added. Since $U_C$ covers the same tree edges that $C$ does, the feasibility of $U$ remains intact after each iteration of the algorithm. 

For a fixed up-link $u \in U$, let $X_u \subseteq L$ denote the set of links containing $u$ in their witness set. Notice that the set of tree edges covered by $X_u$ is always a superset of the tree edges covered by $u$. Indeed, initially this is true by construction, and no link in $X_u$ is deleted so long as $u \in U$. Since $U$ is feasible throughout the algorithm, this implies that $F$ is feasible as well.
\end{proof}

We can apply the decomposition theorem to lower bound the progress made by the algorithm in each iteration.

\begin{lemma}\label{lem: progress}
In every iteration of Algorithm~\ref{algo:localsearch}, there exists a $\lceil \frac{8}{\varepsilon} \rceil$-thin collection of hyper-links $Z \subseteq \mathcal{L}$ such that $$\bar w(\drop_U(Z)) - 1.5w(Z) \geq \frac{1}{n} \Big(\big(1-\frac{\varepsilon}{8}\big)w(F) - 1.5w(OPT_\gamma)\Big).$$
\end{lemma}

\begin{proof}
Recall that $U = \bigcup_{f \in F} W_f$ consists of only up-links, and throughout the algorithm we maintain that the coverage of these up-links is disjoint. Hence, we can apply Theorem~\ref{thm:decomposition} to $U$ using weights $\bar w$ and the hyper-links in $OPT_\gamma$ to obtain a partition $\mathcal{Z}$ of $OPT_\gamma$ such that each part is $\lceil \frac{8}{\varepsilon} \rceil$ - thin, and 

$$\sum_{Z \in \mathcal{Z}} \bar w(\drop_U(Z)) \geq \big(1 - \frac{\varepsilon}{8}\big)\bar w(U) = \big(1 - \frac{\varepsilon}{8}\big) w(F).$$

We show a lower bound on the average value of $\bar w(\drop_U(Z)) - 1.5w(Z)$ over all parts $Z \in \mathcal{Z}$. Note that $\sum_{Z \in \mathcal{Z}} w(Z) = OPT_\gamma$ since $\mathcal{Z}$ is a partition, and $n \geq |\mathcal{Z}|$ since . Using these and the above, we have

$$\frac{1}{|\mathcal{Z}|} \sum_{Z \in \mathcal{Z}} \bar w (\drop_U(Z)) - 1.5w(Z) \geq \frac{1}{n} \Big((1-\varepsilon / 8)w(F) - 1.5w(OPT_\gamma)\Big).$$

Since the average value is lower bounded as above, there must be some subset $Z$ of hyper-links satisfying 

$\bar w(\drop_U(Z)) - 1.5w(Z) \geq \frac{1}{n} \Big((1-\varepsilon / 8)w(F) - 1.5w(OPT_\gamma)\Big),$ as desired.

\end{proof}

This allows us to bound the number of iterations performed by the algorithm in terms of the initial and final potentials, yielding a polynomial runtime since $w(F_0) \leq w(L)$.

\begin{lemma}
Algorithm~\ref{algo:localsearch} runs for at most $\ln(\frac{3/2\cdot w(F_0)}{w(OPT)}) \cdot (\frac{12n}{\varepsilon})$ iterations.
\end{lemma}
\begin{proof}
The potential of the solution $F$ initially is at most $\Phi(F) \leq \frac{3}{2}w(F_0)$. The potential of $F$ never decreases below $w(OPT)$. Hence, since the potential decreases in each iteration by at least a $(1 - \frac{\varepsilon}{12n})$ factor, we have that the number of iterations is bounded above by 
$$\log_{(1 - \frac{\varepsilon}{12n})^{-1}}\Big(\frac{3w(F_0)}{2w(OPT)}\Big) = \frac{\ln\Big(\frac{3w(F_0)}{2w(OPT)}\Big)}  {\ln (1 - \frac{\varepsilon}{12n})^{-1}} \leq \ln \Big(\frac{3w(F_0)}{2w(OPT)}\Big) \cdot \frac{12n}{\varepsilon}$$ where we used $\ln(1+x) \leq x$ for $x > -1$.
\end{proof}

Finally, we show that the cost of the returned solution is small relative to the optimum.

\begin{lemma}
Algorithm 2 returns a feasible STAP solution which costs at most $(1.5 + \varepsilon)$ times the cost of the optimal STAP solution.
\end{lemma}
\begin{proof}
Denote by $OPT$ the optimal STAP solution and by $OPT_\gamma$ the optimal $\gamma$-restricted STAP solution.

Using Lemma~\ref{lem: progress}, we can show that $c(F) \leq (1.5+\frac{\varepsilon}{2})OPT_\gamma$. Indeed, upon termination, there must be no local move which decreases the potential by at least a factor $(1 - \frac{\varepsilon}{12n})$. Thus, $\bar w (\drop_U(Z)) - 1.5w(Z) < \frac{\varepsilon}{12n} \Phi(F)$ for any $\lceil 8/ \varepsilon \rceil$-thin set of hyper-links $\mathcal{Z} \subseteq \mathcal{L}$, and by Lemma~\ref{lem: progress}, this implies 
$$\big(1 - \frac{\varepsilon}{8}\big)w(F) - 1.5w(OPT_\gamma) < \big(\frac{\varepsilon}{12}\big)\Phi(F) \leq \big(\frac{\varepsilon}{8}\big)w(F).$$

Rearranging, this becomes $(1-\frac{\varepsilon}{4})w(F) \leq 1.5w(OPT_\gamma).$ Hence $w(F) \leq (1.5)(\frac{1}{1-\varepsilon / 4}) w(OPT_\gamma)$. For $\varepsilon < 1$, this is at most $(1.5 + \varepsilon / 2 )w(OPT_\gamma)$ as desired.

Now, by Theorem~\ref{lem:rest} and our choice of $\gamma$, we have $c(OPT_\gamma) \leq (1+\varepsilon') c(OPT)$. Combining these inequalities, we have $$c(F) \leq (1.5 + \frac{\varepsilon}{2})(1 + \varepsilon')c(OPT) = (1.5 + \frac{\varepsilon}{2})(1 +  \frac{\varepsilon/2}{1.5 + \varepsilon/2}) c(OPT)= (1 + \varepsilon)c(OPT).$$
\end{proof}

\section{Node Weighted STAP}
\label{sec:nwstap}





In this section, we prove Theorem~\ref{th:nwstap}. The formal description of our algorithm is in Algorithm~\ref{algo:NWgreedy}. Recall that $\cov(A)$ is the set of tree edges covered by joining the nodes of $A \subseteq R$. 

\begin{algorithm}[h]
\caption{Greedy Pseudo-Spiders Algorithm for Node-Weighted STAP}
\textbf{Input:} An instance of NW-STAP with graph $G = (V,E)$, tree $T = (R,E(T))$, links $L = E \setminus E(T)$, and $c: V \setminus R \to \mathbb{R}_{\geq 0}$.\\
\textbf{Output:} A solution $(S,F)$ with $c(S) \leq O(\log^2(|R|)OPT$.\\~

\begin{enumerate}

\item Initialize $S := \emptyset$ and $F := \emptyset$.
\item Initialize $U := E(T)$.
\item \textbf{while} $U \neq \emptyset$: 
\begin{itemize}
    \item Find the pseudo-spider $(S',F')$ approximately minimizing the ratio $\frac{c(S')}{|U \cap \text{cov}(S' \cap R)|}$ (Lemma~\ref{lem:minratio_spider}). 
    \item Add this pseudo-spider to our solution: $S := S \cup S'$ and $F := F \cup F'$.
    \item Contract the covered tree edges: $U := U \setminus \text{cov}(S',F')$.
\end{itemize}
\item Return the feasible solution $(S,F)$. 
\end{enumerate}
\label{algo:NWgreedy}
\end{algorithm}

\subsection{Analysis}
Recall that in each iteration of the algorithm, we obtain a pseudo-spider $(S',F')$ which approximately minimizes the ratio $\frac{c(S')}{|U \cap \cov(S' \cap R)|}$. We begin by showing how one can do this in polynomial time. The main idea is to enumerate over all choices for the head $h \in V \setminus R$ of the pseudo-spider, and also over one choice of foot $v \in R$. There are at most $|V|$ choices for each, hence we may assume that our algorithm knows the correct choice of head $h$ and one correct foot $v$. 

We will find the other feet using an algorithm for submodular maximization subject to a knapsack constraint. Since a pseudo-spider is determined by the choice of its head and feet, this allows us to find the desired  pseudo-spider after several node-weighted shortest path computations. 

\begin{definition}
Let $f$ be a set function $f: 2^{[n]} \to \mathbb{R}$. 
We say that $f$ is \textbf{monotone} if $f(A) \leq f(B)$ for all $A \subseteq B \subseteq [n]$. We say that $f$ is \textbf{submodular} if $f(A + x) - f(A) \geq f(B + x) - f(B)$ for all $A \subseteq B$ and $x \in [n] \setminus B$.
\end{definition}

\begin{lemma}\label{lem:submodular}
Fix some vertex $v \in R$. Let the function $\cov_v: 2^{R \setminus v} \to \mathbb{Z}$ be defined by $$\cov_v(P) := |\cov(P \cup v)|.$$ Then $f(P) := |\cov_v(P)|$ is monotone and submodular.
\end{lemma}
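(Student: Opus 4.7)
The plan is to interpret $\cov(A)$ as the edge set of the unique minimal subtree $T_A$ of $T$ spanning $A$, which is exactly the reading already used on the Hyper-TAP side via $T_\ell = \bigcup_{a \in \ell} P_{a, \apex(\ell)}$. With this interpretation, $f(P) = |E(T_{P \cup \{v\}})|$. Monotonicity is then immediate: if $P \subseteq Q \subseteq R \setminus \{v\}$, then in a tree the minimal spanning subtree of a superset of terminals contains the minimal spanning subtree of the subset, so $T_{P \cup \{v\}} \subseteq T_{Q \cup \{v\}}$ as subgraphs of $T$ and hence $f(P) \le f(Q)$.

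For submodularity, I would first prove the following marginal-gain formula as the only genuine step. For any $A \subseteq R$ with $v \in A$ and any $x \in R \setminus A$,
\[
f\bigl((A \setminus \{v\}) \cup \{x\}\bigr) - f(A \setminus \{v\}) \;=\; d_T\!\bigl(x,\,V(T_A)\bigr),
\]
where $d_T(x, V(T_A))$ denotes the number of edges on the unique tree-path from $x$ to the closest vertex of the connected subtree $T_A$. The justification is that $T_A$ is a connected subtree of $T$, so the tree-path from $x$ to $T_A$ meets $T_A$ at a single vertex $y$, and $T_{A \cup \{x\}}$ is obtained from $T_A$ by appending precisely the $x$--$y$ path along $y$. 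The number of edges added is therefore exactly $d_T(x, V(T_A))$.

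Submodularity then drops out in one line. Given $P \subseteq Q \subseteq R \setminus \{v, x\}$, applying the monotonicity argument to $T_{P \cup \{v\}} \subseteq T_{Q \cup \{v\}}$ gives $V(T_{P \cup \{v\}}) \subseteq V(T_{Q \cup \{v\}})$, so the set of candidate nearest vertices to $x$ only grows when we go from $P$ to $Q$. This yields $d_T\bigl(x, V(T_{Q \cup \{v\}})\bigr) \le d_T\bigl(x, V(T_{P \cup \{v\}})\bigr)$. Invoking the marginal-gain formula for both $P$ and $Q$ then gives $f(Q + x) - f(Q) \le f(P + x) - f(P)$, which is the submodular inequality.

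\textbf{Main obstacle.} The only nontrivial content is the marginal-gain formula; everything else is a one-line corollary of two elementary tree facts (a connected subtree together with a path sharing exactly one endpoint with it is again a subtree, and nesting of Steiner subtrees in a tree is monotone in the terminal set). I would state and prove the marginal-gain formula as a standalone sublemma, and then present monotonicity and submodularity as immediate consequences.
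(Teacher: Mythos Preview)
Your proposal is correct and follows essentially the same approach as the paper: both arguments identify the marginal gain of adding $x$ with the unique tree-path from $x$ to the current subtree $T_{P\cup\{v\}}$, and then use the nesting $T_{P\cup\{v\}}\subseteq T_{Q\cup\{v\}}$ to conclude that this path (equivalently, your distance $d_T$) can only shrink when passing from $P$ to $Q$. The paper states this as the set containment $[\cov(A'+x)\setminus\cov(A')]\supseteq[\cov(B'+x)\setminus\cov(B')]$, while you phrase the same fact as a distance inequality; the content is identical.
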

\begin{proof}
Suppose that $A \subseteq B \subseteq R \setminus v$. Clearly, $f(A) \leq f(B)$, so $f$ is monotone. 

To show that $f$ is submodular, we fix $x \in (R \setminus v) \setminus B$. We want to show that $f(A + x) - f(A) \geq f(B + x) - f(B)$. This amounts to showing that $|\cov(A + v + x)| - |\cov(A + v)| \geq |\cov(B + v + x)| - |\cov(B + v)|$. For ease of notation, let $A' := A + v$ and $B' := B + v$. In fact, we will show the stronger claim that $[\cov(A' + x) \setminus \cov(A')] \supseteq [\cov(B' + x) \setminus \cov(B')]$.

Note that $A'$ and $B'$ are non-empty. Thus, let $\emptyset \neq T_{A'} \subseteq R$ and $\emptyset \neq T_{B'} \subseteq R$ be the subtrees of $T$ which are covered by joining the nodes in $A'$ and $B'$ respectively. Notice that $\cov(A' + x) \setminus \cov(A')$ is the unique path from $x$ to $T_{A'}$ in the tree $T$, and similarly for $B'$. Since $A \subseteq B$, we have that $T_{A'} \subseteq T_{B'}$, hence $[\cov(A' + x) \setminus \cov(A')] \supseteq [\cov(B' + x) \setminus \cov(B')]$ as desired.

\end{proof}

Lemma~\ref{lem:submodular} allows us to exploit the following algorithm for submodular maximization due to Srividenko \cite{sviridenko2004note}.

\begin{lemma}[Srividenko \cite{sviridenko2004note}]\label{lem:srividenko}
There is a polynomial time $(1-\frac{1}{e})$-approximation algorithm to maximize a monotone submodular function subject to a knapsack constraint.
\end{lemma}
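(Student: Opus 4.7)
The plan is to prove Srividenko's $(1 - 1/e)$-approximation for maximizing a monotone submodular $f$ subject to a knapsack constraint $\sum_{i \in S} c_i \le B$ via the classical \emph{partial enumeration plus greedy} scheme. The algorithm I would use is: enumerate over every triple $T \subseteq [n]$ with $|T| \le 3$ and $c(T) \le B$; for each such $T$, greedily extend it by repeatedly adding the item $i \notin T$ that maximizes the ratio $(f(S \cup \{i\}) - f(S))/c_i$ among items that still fit in the remaining budget; output the best feasible set produced (compared also against the best single item). The running time is polynomial since there are $O(n^3)$ starting triples and each greedy extension does at most $n$ rounds with a single function evaluation per candidate.

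For the analysis, let $\mathrm{OPT}$ denote an optimal solution with value $f(\mathrm{OPT})$. If $|\mathrm{OPT}| \le 3$, the enumeration finds $\mathrm{OPT}$ exactly, so assume $|\mathrm{OPT}| \ge 4$. Order the elements of $\mathrm{OPT}$ as $o_1, o_2, \ldots, o_k$ by decreasing greedy marginal contribution (i.e.\ $o_j$ is the element maximizing $f(\{o_1,\ldots,o_{j-1},o_j\}) - f(\{o_1,\ldots,o_{j-1}\})$ among $\mathrm{OPT} \setminus \{o_1,\ldots,o_{j-1}\}$). Consider the iteration of the algorithm starting from $T := \{o_1, o_2, o_3\}$, which is feasible since $c(T) \le c(\mathrm{OPT}) \le B$. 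I then track the greedy extension $T = S_0 \subset S_1 \subset \cdots \subset S_m$ and compare against $\mathrm{OPT} \cup T$.

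The core inductive step is the standard greedy-with-knapsack inequality: for each greedy addition of item $g_j$ with cost $c_{g_j}$, submodularity together with the choice that maximizes ratio gives
\[
f(S_j) - f(S_{j-1}) \;\ge\; \frac{c_{g_j}}{B}\bigl(f(T \cup \mathrm{OPT}) - f(S_{j-1})\bigr).
\]
Unrolling this recurrence yields
\[
f(T \cup \mathrm{OPT}) - f(S_m) \;\le\; \prod_{j=1}^{m} \!\left(1 - \frac{c_{g_j}}{B}\right) \bigl(f(T \cup \mathrm{OPT}) - f(T)\bigr) \;\le\; e^{-\sum_j c_{g_j}/B}\, f(\mathrm{OPT}),
\]
so whenever the greedy fills essentially the whole budget ($\sum_j c_{g_j} \ge B$), monotonicity gives $f(S_m) \ge (1 - 1/e)\, f(\mathrm{OPT})$.

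The main obstacle, and the reason for enumerating triples rather than just singletons, is the case where greedy halts with significant leftover budget because the next desired item $i^*$ does not fit. Here one needs to show that the ``blocked'' marginal $f(S_{m} \cup \{i^*\}) - f(S_m)$ is small enough that the bound still goes through. The trick is that, by the choice of $T = \{o_1, o_2, o_3\}$ as the three largest greedy marginals of $\mathrm{OPT}$, any subsequent greedy pick has marginal at most $\tfrac{1}{3}f(\mathrm{OPT})$ relative to $T$; a short calculation then shows that even if $i^*$ is fictitiously added to $S_m$ the value is at least $(1 - 1/e) f(\mathrm{OPT})$, and by monotonicity so is $f(S_m)$. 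Combining this with the ``budget-full'' case completes the proof. The delicate partial-enumeration argument in the blocked case is where all the work lies; the rest is a routine application of submodularity and the ratio-greedy inequality.
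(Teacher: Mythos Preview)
The paper does not prove this lemma; it is quoted from \cite{sviridenko2004note} and used as a black box in the proof of Lemma~\ref{lem:minratio_spider}. Your sketch follows Srividenko's own argument (partial enumeration over feasible triples, followed by ratio-greedy extension), so there is no alternative proof in the paper to compare against.

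That said, two points in your blocked-case analysis are not right as written. First, ``by monotonicity so is $f(S_m)$'' has the inequality backwards: monotonicity gives $f(S_m) \le f(S_m \cup \{i^*\})$, not $\ge$, so it cannot transfer the $(1-1/e)$ lower bound from $S_m \cup \{i^*\}$ down to $S_m$. What one actually needs is $f(S_m) = f(S_m \cup \{i^*\}) - \bigl(f(S_m \cup \{i^*\}) - f(S_m)\bigr)$ together with a bound on the subtracted marginal, and that bound must be played off against the slack in the recurrence, not against $(1-1/e)f(\mathrm{OPT})$ directly. Second, the claim that ``any subsequent greedy pick has marginal at most $\tfrac{1}{3}f(\mathrm{OPT})$ relative to $T$'' follows from your ordering of $o_1,o_2,o_3$ only when the pick lies in $\mathrm{OPT}\setminus T$; the blocked item $i^*$ need not belong to $\mathrm{OPT}$, so this bound does not apply to it as stated. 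In Srividenko's proof the blocked-case bookkeeping is genuinely more delicate than your outline suggests, and the ``short calculation'' you defer is exactly where the work lies. The high-level architecture of your proposal is correct, but these two steps need repair before the argument goes through.
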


\begin{lemma}\label{lem:minratio_spider}
Fix a head $h \in V \setminus R$ and one foot $p \in R$. There is an $\left(\frac{e-1}{2e}\right)$-approximation algorithm to find the minimum-ratio pseudo-spider with head $h$ and containing foot $p$.
\end{lemma}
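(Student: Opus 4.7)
Since $h$ and $p$ are fixed, a pseudo-spider is determined by its additional feet $F \subseteq R \setminus \{p\}$, with cost $c(F) = c_0 + c_1(F)$ where $c_0 := d(h,p)$ is a constant overhead and $c_1(F) := \sum_{f \in F} d(h,f)$ is linear (here $d(\cdot,\cdot)$ denotes node-weighted shortest-path distances, which can be pre-computed once from $h$ by a single Dijkstra call), and with new tree-edge coverage $g(F) := |U \cap \cov(F \cup \{p\})|$. I would first observe that $g$ is monotone and submodular: this follows from (the proof of) Lemma~\ref{lem:submodular}, because the set-valued marginal-coverage containment shown there is preserved under intersection with the fixed edge set $U$, and intersection with a fixed set preserves both monotonicity and diminishing returns in cardinality. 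The goal is then to maximize $g(F)/c(F)$; let $F^*$ denote the optimum, with $c^* := c(F^*)$ and $g^* := g(F^*)$. The degenerate case $F^* = \emptyset$ gives $g^* = 0$ and is irrelevant, since the outer greedy (Algorithm~\ref{algo:NWgreedy}) always demands strictly positive marginal coverage.

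The plan is to reduce this ratio problem to Sviridenko's knapsack-constrained submodular maximization (Lemma~\ref{lem:srividenko}) by guessing $c_1^* := c_1(F^*)$ up to a factor of two. Concretely, I would enumerate over pairs $(f^*, B)$ where $f^* \in R \setminus \{p\}$ is a guess for the heaviest foot of $F^*$ and $B \in \{2^j \cdot d(h, f^*) : 0 \leq j \leq \lceil \log_2 |R| \rceil\}$ is a geometric guess for $c_1^*$, giving $O(|R| \log |R|)$ candidates in total. For each pair $(f^*, B)$ I would invoke Sviridenko's algorithm on the ground set $\{f \in R \setminus \{p\} : d(h,f) \leq d(h,f^*)\}$ with submodular objective $g$, linear cost $c_1$, and knapsack capacity $B$, obtaining a set $F_{f^*,B}$ and the corresponding pseudo-spider (the union of the pre-computed shortest paths from $h$ to each foot in $F_{f^*,B} \cup \{p\}$). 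I would then return the best-ratio pseudo-spider across all guesses.

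The analysis focuses on the guess in which $f^*$ truly is the heaviest foot of $F^*$ and $B$ is the smallest element of the discretized set satisfying $B \geq c_1^*$. Since $c_1^* \in [d(h,f^*), |R| \cdot d(h,f^*)]$, such a $B$ exists and satisfies $B \leq 2 c_1^*$. Under this guess $F^*$ lies in the restricted ground set and is knapsack-feasible, so Sviridenko returns $F_{f^*,B}$ with $g(F_{f^*,B}) \geq (1 - 1/e)\, g^*$ and $c(F_{f^*,B}) = c_0 + c_1(F_{f^*,B}) \leq c_0 + B \leq 2(c_0 + c_1^*) = 2 c^*$, yielding
\[
    \frac{g(F_{f^*, B})}{c(F_{f^*, B})} \;\geq\; \frac{(1 - 1/e)\, g^*}{2 c^*} \;=\; \frac{e-1}{2e} \cdot \frac{g^*}{c^*},
\]
which matches the claimed approximation ratio.

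The main obstacle I expect is designing the enumeration of budgets so that it is simultaneously polynomial in size and guaranteed to approximate $c_1^*$ within a factor of two. The key idea resolving this is that first guessing the heaviest foot $f^*$ confines $c_1^*$ to a multiplicative range of width $|R|$, after which a logarithmic-sized geometric grid suffices. A secondary subtlety is the constant $c_0$ sitting inside the denominator (so the problem is not quite a textbook submodular-over-modular ratio), but the factor-$2$ slack on $c_1$ carries over automatically to $c_0 + c_1$, so the analysis goes through without modification.
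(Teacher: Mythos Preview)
Your proposal is correct and follows essentially the same approach as the paper: reduce the ratio problem to Sviridenko's knapsack-constrained submodular maximization (Lemma~\ref{lem:srividenko}) by geometrically guessing the budget, losing a factor of $2$ on cost and $1-1/e$ on coverage. The only difference is in how the doubling search is set up: the paper simply doubles the budget from $1$ up to the total node weight $B$ (giving $O(\log B)$ calls, polynomial in the input encoding), whereas you first guess the heaviest foot $f^*$ to anchor the geometric grid in the interval $[d(h,f^*),\,|R|\cdot d(h,f^*)]$, yielding $O(|R|\log|R|)$ calls. Your variant is slightly more careful in that it avoids any dependence on the bit-length of the costs, but both arguments deliver the same $(e-1)/(2e)$ ratio.
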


\begin{proof}
First, we assume that we know the cost of the minimum-ratio pseudo-spider, i.e, the sum of the node-weighted shortest paths from head to all feet. For a fixed cost, the problem of finding the minimum-ratio pseudo-spider is equivalent to finding the maximum number of edges of $T$ covered by a set of feet subject to a knapsack constraint, where each potential foot $x$ has cost equal to the length of the shortest node-weighted path from $h$ to $x$ (not including the cost of $h$).
Using Lemma~\ref{lem:srividenko}, we can find a $(1- \frac{1}{e})$ approximate set of feet for this problem. This means that we can find a spider covering at least $(1-\frac{1}{e})$ of the edges covered in the optimal pseudo-spider under the fixed head and foot.

To fix the issue that we don't know the cost of the minimum-ratio pseudo-spider in advance, we can use a doubling search: we guess the cost to be $1, 2, 4, \ldots, B$, where $B$ is the sum of the total cost on Steiner nodes. Suppose the true cost of the minimum-ratio pseudo-spider is $C$ that satisfies $2^m < C < 2^{m+1}$. Consider the iteration when we guess the cost to be $2^{m+1}$. The number of edges covered by the pseudo-spider in this iteration is at least $(1- \frac{1}{e})$ times the number of edges covered in the optimum ratio pseudo-spider and the cost of the pseudo-spider that the algorithm finds is at most twice that of the optimum ratio pseudo-spider. Thus the approximation factor of this pseudo-spider of the optimum ratio is $\left(\frac{e-1}{2e}\right)$.

For the running time, there at most $\log B$ rounds, which is polynomial in the size of the input. In each round, the running time is polynomial due to Lemma~\ref{lem:srividenko}. Thus the algorithm runs in polynomial time overall.
\end{proof}

Suppose $(\mathcal{S},E)$ is an instance of Set Cover with where $\mathcal{S} \subseteq 2^E$ and each set $S \in \mathcal{S}$ has some non-negative cost $c_S$.  It is well known that the greedy algorithm, which chooses in each iteration a set $P \in \mathcal{S}$ with minimum ratio of ``cost per coverage" achieves an approximation ratio of $O(\log |E|)$. That is, we choose in each iteration a set $P$ satisfying
$$P = \arg \min_{S \in \mathcal{S}} \left\{\frac{c(S)}{\text{number of uncovered elements covered by }S}\right\}.$$

Furthermore, if in each step $P$ is chosen to be an $\alpha$-approximate minimizer, then the approximation ratio is bounded above by $O(\alpha \log |E|)$.

\begin{proof}[Proof of Theorem~\ref{th:nwstap}]
We show that Algorithm~\ref{algo:NWgreedy} is an $O(\log^2 |R|)$-approximation for node-weighted STAP and runs in polynomial time.

Let $OPT$ denote the cost of the optimal augmentation.
By Theorem~\ref{thm:spider_decomp}, there is a solution consisting only of pseudo-spiders with cost at most $O(\log(|R|))OPT$.

In each iteration, Algorithm~\ref{algo:NWgreedy} selects a pseudo-spider which has a cost-ratio at most $(\frac{e-1}{2e})$ times that of the minimum cost-ratio pseudo spider. Thus by the standard analysis of the greedy algorithm for Set Cover, after all edges of the tree are covered, the total cost $c(S)$ of our algorithm is at most $O(\log |E(T)|)$ times the cost of the optimal pseudo spider solution. Thus, overall, our algorithm returns a feasible solution $(S,F)$ with cost 

$$c(S) \leq O(\log |E| \log |R|)OPT \leq O(\log^2(|R|))OPT.$$  

Lemma~\ref{lem:minratio_spider} shows that we can implement each iteration of the while loop in polynomial time. Since in each iteration, some tree edge is covered, the number of iterations is bounded by $|E(T)|$. Thus the algorithm runs in polynomial time overall.
\end{proof}

\section{Acknowledgements}
The authors would like to thank the anonymous referees for their helpful comments and for pointing out that the approximation ratio of $1+ \ln (2) + \varepsilon$ in the submission could be improved to the $1.5 + \varepsilon$ result in Theorem~\ref{th:stap}. The authors would also like to thank Chandra Chekuri for noting that NW-STAP can be approximated to within a factor of $O(\log |R|)$ using Nutov's result on Node-Weighted SNDP. 

This material is based upon work supported in part by the U. S. Office of Naval
Research under award number N00014-21-1-2243 and the Air Force Office of
Scientific Research under award number FA9550-20-1-0080.

\bibliographystyle{plain}
\bibliography{refs.bib}

\end{document}